\newtheorem{lemma}{Lemma}
\newtheorem{theorem}{Theorem}
\newtheorem{proposition}[theorem]{Proposition}
\theoremstyle{definition}
\newtheorem{definition}{Definition}
\newcommand{\fix}[1]{\textcolor{red} {\textbf{#1}}}
\newcommand{\argmax}{\operatorname{arg\,max}}
\newcommand{\prob}[2][]{\text{\bf Pr}\ifthenelse{\not\equal{}{#1}}{_{#1}}{}\!\left[#2\right]}
\newcommand{\expect}[2][]{\text{\bf E}\ifthenelse{\not\equal{}{#1}}{_{#1}}{}\!\left[#2\right]}
\newcommand{\rev}{{\normalfont\textsc{rev}}}
\newcommand{\opt}{{\normalfont\textsc{Opt}}}
\newcommand{\E}{\mathbb{E}}
\newcommand{\C}{w}%{\mathcal{C}}
\def\P{\ensuremath{\mathcal{P}}}
\def\F{\ensuremath{\mathcal{F}}}
\def\Pr{\ensuremath{\mathrm{Pr}}}
\def\argmax{\ensuremath{\mathrm{argmax}}}
\def\L{\ensuremath{\mathcal{L}}}
\def\srev{\separate}
\def\brev{\bundle}
\def\dicut{\ensuremath{\textsc{max-dicut}}}
\newcommand{\etaij}{{\eta_{ij}}}
\newcommand{\classname}{\textsc{mpph}\xspace}
\newcommand{\freeset}{\textsc{separate/free}\xspace}
\newcommand{\separate}{{\textsc{srev}}\xspace}
\newcommand{\bundle}{{\textsc{brev}}\xspace}
\newcommand{\add}{\textsc{additive}\xspace} %default, so shouldn't be used in equations.
\newcommand{\ppc}{\textsc{ppc}\xspace}
\newcommand{\ph}{\textsc{ph}\xspace}
\newcommand{\mph}{\textsc{mph}\xspace}
\newcommand{\pph}{\textsc{pph}\xspace}
\newcommand{\mpph}{\textsc{mpph}\xspace}
\newcommand{\notshow}[1]{{}}
\newcommand{\nphard}{\text{\em NP-Hard}}
\newcommand{\ind}{\mathds{1}}
\newcommand{\hyphen}{\text{-}}
\newcommand{\etahyperedge}[1]{\eta_{i#1}}
\newcommand{\etaset}[1]{\eta_i(#1)}
\newcommand{\etamph}[1]{\eta_{i#1}^{\ell}}
\newcommand{\dcfamily} {\ensuremath{\mathcal{M}}\xspace}
\newcommand*{\pr}[2][]{\text{Pr}\ifx\\\left[#1\right]\\\else_{#1}\fi \left[#2\right]}
\newcommand{\reals}{\mathbb{R}}
\begin{document}
% Title portion. Note the short title for running heads 
\title{Simple and Approximately Optimal Pricing for Proportional Complementarities}
\author{Yang Cai%
\thanks{%
    {Yale University (\url{yang.cai@yale.edu})}.  The work of Y. Cai was supported NSERC Discovery RGPIN-2015-06127 and FRQNT 2017-NC-198956.  }
\and Nikhil R. Devanur%
\thanks{%
    {Microsoft Research (\url{nikdev@microsoft.com})}}
\and Kira Goldner%
\thanks{%
    {Columbia University (\url{kgoldner@cs.columbia.edu})}.  The work of K. Goldner was supported by a Microsoft Research PhD Fellowship as well as NSF grants CCF-1420381 and CCF-1813135.}
\and R. Preston McAfee%
\thanks{%
    {(\url{preston@mcafee.cc})}} }

\maketitle

% note that the abstract must come before \maketitle
\begin{abstract}
We study a new model of complementary valuations, which we call ``proportional complementarities.''  In contrast to common models, such as hypergraphic valuations, in our model, we do not assume that the extra value derived from owning a set of items is independent of the buyer's base valuations for the items.  Instead, we model the complementarities as proportional to the buyer's base valuations, and these proportionalities are known market parameters.  

Our goal is to design a simple pricing scheme that, for a single buyer with proportional complementarities, yields approximately optimal revenue.  We define a new class of mechanisms where some number of items are given away for free, and the remaining items are sold separately at inflated prices.  We find that the better of such a mechanism and selling the grand bundle earns a 12-approximation to the optimal revenue for pairwise proportional complementarities.  This confirms the intuition that items should not be sold completely separately in the presence of complementarities.  
%Enroute to proving our result, we provide an improved analysis for the single additive bidder case and show that the better of selling separately or grand bundling is a $5.382$-approximation (compared to $6$) to the optimal revenue.  
In the more general case, a buyer has a maximum of proportional positive hypergraphic valuations, 
where a hyperedge in a given hypergraph describes the boost to the buyer's value for item $i$ given by owning any set of items $T$ in addition.  The maximum-out-degree of such a hypergraph is $d$, and $k$ is the positive rank of the hypergraph.  For valuations given by these parameters, our simple pricing scheme is an $O(\min\{d,k\})$-approximation.

\end{abstract}

% Renew this after \maketitle if the default list of authors is too long for headers
%\renewcommand{\shortauthors}{W.\ Vickrey et.\ al.}

\section{Introduction}

	\iffalse

Potential titles:
\begin{itemize}
	\item an approximately optimal pricing for a buyer with complementary boosts
\end{itemize}

\begin{itemize}
\item bundling research  and complements
\item  motivation 
\item Model 
%model discussion
\item examples 
\item Main result: upper and lower bounds 
\item Numerical example 
\item comparison with alternate models 
\end{itemize}
\fi 

Consider a setting where multiple items are being sold, and a buyer's valuations for the items have \emph{complementarities}.  That is, the buyer derives some value from owning a combination of items that is not present when owning any of the items individually, as in the following examples.

\paragraph{Microsoft Office Example:} 
A  person who values producing documents will value software such as Microsoft Word that helps him in this task. If the person wants to include some charts in his document, then this is made easier and faster by having another piece of software that specializes in making charts such as Microsoft Excel. Thus owning Excel in addition to Word boosts the value of Word for him, since he can then produce more documents in the same amount of time.  

\paragraph{Cloud Services Example:}
A cloud service provider offers multiple heterogeneous items 
 that are both substitutes and complements. 
You can purchase a general purpose virtual machine (VM)
 or a special purpose VM such as a ``data science VM''; 
these are substitutes. 
You can also purchase an upgrade such as 
 a fast solid state disk-drive (SSD) 
 which would be complementary to either of those VMs. \\

%who offers virtual machines (VM)  in several configurations, with a choice of processor cores, memory, and storage capacity.
%One way to model a buyer who wants a single VM would be via unit-demand  valuations, with a valuation for each of the configurations offered. 
%In this case his valuations for the different configurations would be highly correlated, and existing results that require independence across these valuations would be very inaccurate. 
%An alternate way to model this is to consider the cores, memory, and  storage as complements that boost the value of the ``base version'' of the VM.   \\

\indent The goal of this work is to understand how 
 a revenue-maximizing seller should price items 
 such as Microsoft Office products or cloud services 
 when facing a buyer with such complementarities.
To this end, we introduce a new model of complementarities, 
 design a pricing scheme for this model, 
 and show worst-case approximation guarantees. 
 
In recent years\xspace, there has been a surge of research activity on \emph{optimal combinatorial pricing}. This is the problem of determining and pricing bundles of heterogeneous items in order to maximize revenue from selling to a buyer who has a combinatorial valuation function. The theme of the research has been \emph{simple vs. optimal}, where simple pricing schemes are shown to approximate the optimal (possibly randomized) pricing scheme to within a universal constant multiplicative factor, independent of the number of items. E.g., for additive valuations, where the buyer's valuation for any set of items is just the sum of her valuations for each individual item, \citet{BILW} show that  the revenue from either selling each item separately (\separate), or selling the grand bundle of all the items (\bundle) is a 6-approximation. Similar results have been proven for much broader settings, such as one buyer with unit-demand \citep{chawla2007algorithmic} and  subadditive \citep{RW} valuations, and multiple buyers with additive~\citep{yao2015n}, unit-demand~\citep{chawla2010multi}, gross-substitutes~\citep{ChawlaM16}, and XOS valuations~\citep{CZ}. 

All of the above valuation classes are complement-free. In contrast,  in practice, bundling is most attractive when the items are complementary to each other.  

Due to negative results for obtaining even good \emph{welfare} approximations in polynomial time under complementary valuations \citep{LOS02}, \citet{ABDR} introduced a restricted model of complements called the  
\emph{positive hypergraphic (\ph) valuation} model: each item is a vertex in a given hypergraph.  For any \emph{hyperedge} given by a subset of items $S$, the buyer gets an additional value of $v_S>0 $ if he gets all of the items in $S$.  The valuations are parameterized by the \emph{positive rank} $k$, which is  the maximum size of (number of items in) a hyperedge.  They provide an algorithmic $k$-approximation to welfare was given in polynomial time for \ph-$k$ valuations, and give a truthful mechanism that obtains an approximation factor of $O(\log^k{m})$.  Following this, \citet{FFIILS} defined a hierarchy of valuations that generalize the \ph-$k$ model, as well as many other models of restricted complements.  They define the class of \emph{maximum-over-positive-hypergraph-$k$} valuations (\mph-$k$).  That is, there exist valuation functions $\{v_{\ell}\}_{\ell \in \L}$ such that $v_\ell \in $\ph-$k \,\forall \ell$ and $v(S) = \max_{\ell \in \L} v_\ell(S)$.  They provide an algorithmic $(k+1)$-approximation to welfare in polynomial time for \mph-$k$ valuations, and show that simultaneous first-price auctions guarantee a price of anarchy of at most $2k$ for bidders with \mph-$k$ valuations.

The first \emph{revenue} result for complementary valuations is by  \citet{EFFTW}, who consider the \ph-$k$ valuations where the value $v_S$ for each hyperedge $S$ is drawn independently from a known prior distribution.  
%As with all previous results, Eden et al. consider the best of two simple pricing schemes,  . 
Eden et al. show that in this case the approximation ratio of the better of selling separately and grand bundling  is  $\Theta(d)$, 
where $d$,  the maximum-degree of the hypergraph,
is the maximum number of hyperedges that any one item is part of. 
%argue that a good choice of parameter,  and denote by ; they 
%they complement this result by also giving a matching lower bound of $\Omega(d)$.  
Further, they show that 
other natural parameters that have been considered for complementary valuations have very bad lower bounds. 
The approximation ratio could be exponential in the number of items $m$, %and therefore in order to get more meaningful bounds, one has to parameterize the class of hypergraphs in a reasonable way. 
%In particular, an alternate choice of parameter is 
as well as in the  \emph{positive rank} $k$.
%However, for revenue maximization,  Eden et al. show that the approximation factor can be \textit{exponential} in $k$.\footnote{Their argument that $d$ is the ``right choice'' is more subtle: that the parameter  $d$  ranges from 1 to $2^m$ and gives approximation ratios in the same range, whereas the parameter $k$ ranges from 1 to $n$ and gives approximation ratios that go from 1 to $2^m$\yangnote{$2^n$ or $2^m$}. }  

\subsection{Proportional Complementarities Model}
\label{sec:intro.model}
The \ph valuation model for the Microsoft Office example would have 
 three values, one each for Word, Excel, and the pair (Word, Excel),
 with each of them drawn independently from a different distribution. 
While having the values for Word and Excel be independent may be reasonable, 
 that the value for the pair (Word, Excel) be independent of the other two 
 seems unrealistic. 
Similarly, for the cloud services example, 
 the PH model would have that the value for the pair (VM, SSD) be 
 independent of the value for the VM alone, 
 which is once again unrealistic. 

We introduce what we call a \emph{proportional complementarities} model of valuations; 
a special case of this model is \emph{proportional pairwise complementarties} (\ppc). 
We illustrate this model through the examples we considered before.

\paragraph{Microsoft Office Example in the \ppc model:} 
%A  person who values producing documents will value software such as Microsoft Word that helps him in this task. If the person wants to include some charts in his document, then this is made easier and faster by having another piece of software that specializes in making charts such as Microsoft Excel. Thus Excel boosts the value of Word for him, since he can then produce more documents in the same amount of time.  
We still have a value for each of Word and Excel, 
 say $v_1$ and $v_2$ respectively, 
 that are independent of each other. 
Our model differs in how the buyer values the combination of the two 
 by assuming that the additional value derived from having both items 
 is due to a better utilization of either item, 
 and hence is proportional to (rather than independent of) 
 the buyer's base valuation for Word and for Excel.  
 %Thus, if one consumer enjoys an $x\%$ higher value for Word, he enjoys the same $x\%$ higher additional value derived from Word due to the presence of Excel.   
 %%%item $i$ and item $j$.
 %%%Thus if one consumer enjoys an $x\%$ higher value on item $i$, he enjoys the same $x\%$ higher additional value derived from $i$ due to the presence of $j$.   
This is captured in our model by having a multiplier for the pair (Word, Excel), denoted by $\eta_{1,2}$;
 say Excel always adds $23\%$ to the value of Word,   
 then we would have $\eta_{1,2}= 0.23.$ 
One could get an estimate of this quantity by 
 observing the frequency of activities between the two, 
 such as dragging Excel charts into Word. 
While not fully general, these proportionalities make intuitive sense, because if a buyer values an item highly, he is likely to care more about its complements too, as they enhance that item.
%Excel boosts the value of Word for the buyer, since he can then drag charts into Word and produce more documents in the same amount of time.  
The value for purchasing both items in our model would then be 
$$v_1(1+ \eta_{1,2}) + v_2.$$
The example makes a nice distinction between our model and the one by \cite{EdenFFTW17b}. Instead of modeling the value of the pair (Word, Excel) as independent from the value of Word or Excel, we model the value of the pair (Word, Excel) as positively correlated with the value of the other two.

%Similar synergies hold for other pairs of software such as Powerpoint and Excel.
%In a different way, cloud storage such as Onedrive boosts the productivity in creating documents and slides via ease of access and sharing abilities. 
%Each of this is modeled by having a multiplier.  

The other assumption we make is that while the seller does not know the exact values, he knows these proportions of complementarities. %\kiranote{This example isn't until later.}    
%In the Microsoft Office illustration, this assumption says that, while the seller does not know how much you value Word, she knows that Excel always adds $23\%$ 
%to your value of Word.  
This is perhaps the least accurate assumption in applications, because such values could reasonably vary across individuals.  
However, in circumstances where the way products are used together is approximately fixed, 
such as dragging Excel charts into Word, 
it is not unreasonable to assume that these values are known.
This is especially true when it comes to ``digital goods,'' where data about interactions between items can be gathered, and the parameters can be estimated from this data, e.g. via estimating cross-price elasticities.

Allowing the proportions (i.e., the $\eta$s) to vary across individuals 
 is an interesting direction for future research.  
We present one possible approach 
 via a common generalization of our model and the \ph model
 in Section~\ref{sec:modelcomparison}. 
This generalization further illustrates 
 the similarities and the differences between the two models.  

\paragraph{Proportional pairwise complementarities:} 
We first define the \ppc model. 
%a special case of our model with only pairwise complementarities, i.e.,  the case where the hypergraph is just a graph.  We  provide a 12-approximation for this case. 
A single seller offers $m$ heterogeneous items for sale to a single buyer. 
(Equivalently, there is a population of buyers, but no supply constraints on the seller, as is the case with digital goods like Microsoft Office products.)
%Assume for now that the marginal cost of each item is 0. 
%We will extend the results (when possible) to include marginal costs in Section \ref{sec:mc}. 
We model the structure of the complementarities among the items via the following  parameters, which are assumed to be known to the seller:\footnote{We use the notation $[m]$ to indicate the set of first $m$ natural numbers, $\{ 1,2,\ldots,m\}$.}  
$$
\etaij \in \reals_+ \quad \quad  \forall~i,j \in [m], i \neq j  . \quad\quad\quad \includegraphics[scale=.4]{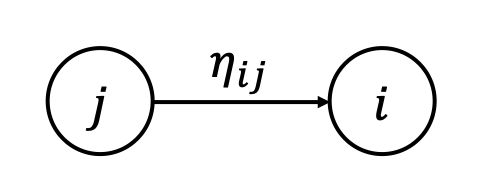}
$$

\begin{comment}
\begin{figure}[h!]
	\centering{\includegraphics[scale=.4]{figures/etaij}}	
	\caption{A directed graph representation of the $eta$ parameters.}
	\label{fig:etaij}
\end{figure}
\end{comment}

The parameter $\etaij$ captures how much having item $j$ boosts the valuation that the buyer derives from item $i$.
The valuation of a buyer is determined by his type $t$, which is a 
vector in $\reals_+^m$, and is the private information of the buyer. 
The $i^\mathrm{th}$ coordinate of $t$ is $t_i$, which represents his base valuation for item $i$ in the absence of  any other items. If the buyer also gets item $j$, then his valuation for item $i$ is boosted by an additional $\etaij t_i$. 
From this, we get that for any bundle $S \subseteq [m]$, 
the buyer's valuation for $S$ is 
\[ 
v(t,S) := \sum_{i \in S}\etaset S  t_i , \quad\quad \textrm{ where } \quad\quad \etaset S = 1 + \sum _{j \in S \setminus \{i\}} \eta_{ij}.
\]
Note that $\etaij$ need not be equal to $\eta_{ji}$, and asymmetric boosts are only more general. We make the Bayesian assumption that $t $ is drawn from a product distribution $\Pi_{i\in m} F_i$. 
The distributions $F_i$ for all $i \in [m]$ (as well as the parameters $\etaij$) are known to the seller. 
%We call this class of valuation functions \emph{proportional pairwise complementarities} (\ppc). 

This more general asymmetric case corresponds to directed graphs (and hypergraphs).  Thus we define the \emph{directed-positive-rank} $k$ of the graph to be the maximum size of (number of items in) the \emph{source} of a (hyper)edge.  Thus, for the pairwise case, $k=1$.

%placeholder for where example 1 was

\paragraph{The general case:}
The general class of valuations we consider is defined formally in Section~\ref{sec:prelims}; we give an informal description here. 
First of all, we allow hyperedges, instead of edges, i.e., each pair of item $i$ and a disjoint set of items $T$ forms a directed hyperedge $(T,i)$ and has a certain boost associated with it, denoted by $\eta_{iT}$: this is the boost of having all items in $T$ on item $i$.  
The valuation of a set $S$ now includes all possible boosts due to hyperedges $(T,i)$ for $T \sqcup\{i\} \subseteq S$ (where $\sqcup$ denotes disjoint union).  
We call this class of valuations \emph{proportional positive hypergraphic (\pph) valuations}. 
The other generalization is to allow the boost to be the maximum of the boost from multiple hypergraphs. 
%as follows: note that the valuation $v(t,S)$ has two summations, an outside summation over the items in $S$, and an inner summation in the definition of $\etaset S$. 
%Essentially, we can add a $\max$ in front of each  of those summations. 
We call this class of valuations \emph{maximum of proportional positive hypergraphic (\mpph) valuations}. 
We denote by $k$ the directed-positive-rank and by $d$ the maximum-degree of the hypergraph. 
% When the hypergraphs are restricted to have positive rank $k$ or maximum-degree   $d$, we refer to these classes 
%as \mpph-$k$ and \mpph-$d$ respectively. 
We tie this back to the cloud services example to show how such a generalization is useful.   

\paragraph{Cloud Services Example:}
Suppose that we had access to two types of VMs, VM1 and VM2,  
 that are meant for different types of workloads. 
We can also purchase additional disk drives (DDs) 
 that allow us to run larger workloads. 
DDs come in two technologies, fast and slow, say DD1 and DD2. 
Having either of the DDs can boost the value for a VM, 
 and having both of them boosts it even more 
 but less than the sum of the individual boosts. 
This could be modeled as follows. 
There are 4 items, 1 and 2 are the VMs, and 3 and 4 are the DDs. 
For each of $i \in \{1,2\}$ and $j \in \{3,4\}$, 
we have the boosts $\eta_{ij}$ as well as $\eta_{i\{3,4\}}$. 
%The value is additive over items, but the value for each item 
%depends on what boosts it receives. 
Let $x_3$ and $x_4$ be binary variables indicating whether items 3 and 4 were respectively purchased or not.  
The value derived from item $i$ for $i \in \{1,2\}$ depending on these choices is 
\[ t_i \cdot(1 + \max\{\eta_{i3} x_3, \eta_{i4} x_4, \eta_{i\{3,4\}}x_3x_4\}).\]
Thus VM1 can get a boost of $\eta_{13}$ from having DD1, 
 or $\eta_{14}$ from DD2, but 
 if you have both DD1 and DD2, 
 the boost is $\eta_{1\{3,4\}}$ rather than $\eta_{13} + \eta_{14}$.

%Consider a cloud provider who offers virtual machines (VM)  in several configurations, with a choice of processor cores, memory, and storage capacity.
%One way to model a buyer who wants a single VM would be via unit-demand  valuations, with a valuation for each of the configurations offered. 
%In this case his valuations for the different configurations would be highly correlated, and existing results that require independence across these valuations would be very inaccurate. 
%We model the buyer's valuations by considering the cores, memory, and  storage as complementary to a ``base version'' of the VM.  
%The buyer's valuation for the base VM is boosted by higher amounts of cores,  memory, or  storage capacity, in proportion to the speed up achieved through these upgrades on average. 
%These upgrades are thought of as separate items. 
%E.g., suppose that the base version has just a single core, and there is a choice of upgrading to either 2 cores or 4 cores. 
%Suppose that on average a 2 core VM is $1.8\times$ faster and a 4 core VM is $3 \times$ faster than the base version. 
%This can be modeled by having 3 items, 1: the base VM, 2: an ``upgrade to 2 cores'', and 3: an ``upgrade to 4 cores''. 
%We then define the boost to item 1 as $\max\{ 0.8 x_2 , 2 x_3 \}$, where $x_2$ and $x_3$ respectively indicate whether items 2 and 3 were purchased or not. 
%

\section{Preliminaries} \label{sec:prelims}
We now give the formal description of the \classname valuation model. 
There is a single seller offering $m$ heterogeneous items for sale to a single buyer. 
The following parameters determine the structure of complementarities among items via  boosts to  base valuations. 
There is a hypergraph with the set of items $[m]$ as vertices whose edges $(T,i)$ correspond to a combination of items  $T$ and a disjoint item $i$ to which the combination gives a boost.
%combinations of items that give a boost to other items. 
Moreover, there could be several possible boosts out of which only the highest  is activated. 
For each item $i\in[m]$, for each hyperedge $(T,i)$, and for each $\ell \in [K]$ for some integer $K$, we have the parameter
%\[ 
$\etamph T \in \reals_+$. 
%\]
\begin{figure}[h!]
	\centering{\includegraphics[scale=.35]{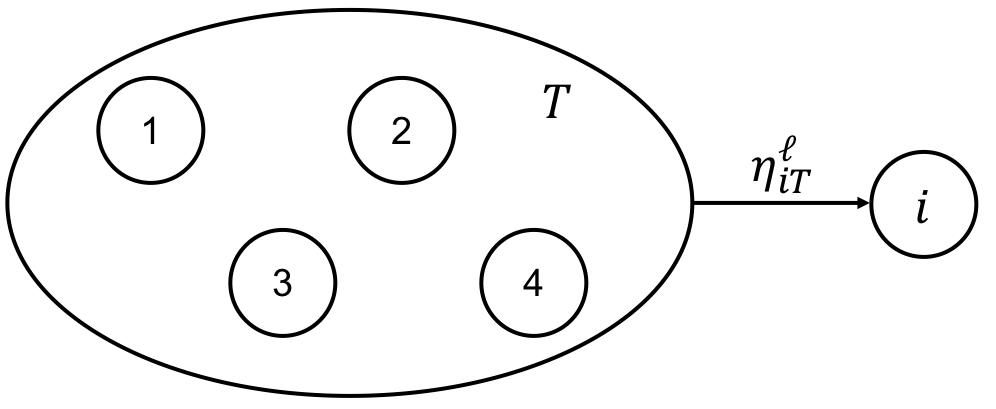}}	
	\caption{A directed graph representation of the $\eta$ parameters.}
	\label{fig:etaij}
\end{figure}

The buyer's valuation for  any bundle $S \subseteq [m]$ is  
\[ 
v(t,S) = \sum_{i \in S}  \etaset S t_i , \quad\quad \textrm{ where } \quad\quad \etaset S = 1 + \max_{\ell \in [K]} \sum _{T \subseteq  S\setminus \{i\}}  \etamph T.
\]
%constrained additive definition -- no longer necessary
\begin{comment}
We also allow  constraints on which items the buyer can simultaneously derive value from. 
This is captured via a downward closed family of subsets of items, denoted by \dcfamily.\footnote{\dcfamily is downward closed if for every set $\ell$ that is in \dcfamily,  all subsets of $I$ are also in \dcfamily.} 
The valuation of a buyer is determined by his base valuations for each item. 
This is given by his type $t$, which is a 
vector in $\reals_+^m$, whose $i^\mathrm{th}$ coordinate is $t_i$.  
The buyer's valuation for  any bundle $S \subseteq [m]$ is  
\[ 
v(t,S) = \max_{I \in \dcfamily, I \subseteq S} \sum_{i \in I}  \etaset S t_i , \quad\quad \textrm{ where } \quad\quad \etaset S = 1 + \max_{l \in [k]} \sum _{T \subseteq  S}  \etamph T.
\]
\end{comment}
We refer to the case where the boosts are simply the sum (i.e. $K=1$) as additive boosts, and the general case ($K>1$) as XOS boosts\footnote{Compare this with XOS valuations: where $v_i(S) = \max _{\ell \in [K]} v_{ij}^{\ell}$ for $K$ vectors $\vec{v}_i^\ell$.}. Note that $\eta_i(S)$ always includes the base valuation for item $i$ (the $+1$) so it is not entirely comprised of boosts, but we overload and call this term the boost anyway.  Observe that the boosts are always monotone in the set, that is, if $\ell(S) \in \argmax _{\ell \in [K]} \sum _{T \subseteq S\setminus \{i\}} \eta_{iT}^{\ell(S)}$, then it always the case that for all $S \subseteq S'$,
\begin{equation}
\eta_i(S) = 1 + \sum _{T \subseteq S\setminus \{i\}} \eta_{iT}^{\ell(S)} 
\quad \leq \quad 1 + \sum _{T \subseteq S'\setminus \{i\}} \eta_{iT}^{\ell(S)} 
\quad \leq \quad 1 + \max _{\ell \in [K]} \sum _{T \subseteq S'\setminus \{i\}} \eta_{iT}^{\ell} = \eta_i(S') \label{eq:monotoneboosts}
\end{equation}
We assume that $t $ is drawn from a product distribution $F = \Pi_{i=1}^m F_i$. 
The distributions $F_i$ for all $i \in [m]$ and the $\eta$s %, and the family of subsets \dcfamily 
are all known to the seller. 
However, the type realization $t$ is private information of the buyer. 

Our approximation ratios depend on the parameters $k$ and $d$ of the underlying hypergraph. 
The parameter $k$, the directed-positive-rank, is the maximum size of (number of items in) the \emph{source} of a (hyper)edge.  Thus it is an upper bound on the size of the set in any hyperedge, i.e., $|T| \leq k$ for each hyperedge $(T,i)$.  
%we denote this class of valuations by \mpph-$k$. 
The parameter $d$, the maximum-out-degree,  is an upper bound on the number of hyperedges that contain a particular vertex, i.e., for each $i\in [m], |\{ $ hyperedge $(T,j): i \in T  \}| \leq d $. 
%; we denote this class by \mpph-$d$.
We suppress the dependence on the hypergraph in our notation, since it should always be clear from the context.
For the special case of pairwise complementarities (\ppc) we follow the notation in \Cref{sec:intro.model}. 
%\kiranote{Rewrite as $|\{ $ hyperedge $(T,j): i \in T  \}| \leq d $?}

\subsection{Optimal Mechanisms in Various Settings}
From the revelation principle, we can restrict our attention to direct revelation mechanisms, where the buyer reports his type. 
A mechanism is therefore defined by the allocation and the payment functions. 
We allow randomized allocation rules, with the assumption that the buyer is risk neutral.  
Let $x_S(t)$ denote the probability that the bundle $S\subseteq [m]$ is allocated to the buyer of type $t$; 
let $p(t)$ be his payment.
The incentive-compatibility (IC) constraints require that for each buyer type, the buyer maximizes utility by reporting his true type.\footnote{We do not formally define IC constraints since we can bypass it due to Lemma \ref{lem:duality}, but our mechanisms are clearly IC.} 
Among all IC mechanisms, the optimal mechanism maximizes the expected revenue  
\[ \mathbf{E}_{t} [p(t)]. \]
\paragraph{Notation:} 
We use the following convention to denote the revenue from a particular mechanism for a given class of valuations, for a particular distribution over types: 
$$[\text{Mechanism name}]\hyphen[\text{Valuation Class}]([\text{Distribution}]).$$ 
For example, the optimal mechanism for \ppc valuations with types drawn from $F$ is denoted by 
$\opt\hyphen\ppc(F)$. We drop the distribution when it is clear from the context. 
We also drop the valuation class when it is additive (\add) and it is clear from the context: e.g., the revenue from selling the grand bundle for additive valuations on types drawn from the distribution $F$ is just $\bundle$. 

\subsection{Lower Bound on the Better of Selling Separately and Grand Bundling} \label{subsec:lbstandard}

We now see just how badly the standard approach of selling separately or grand bundling fails in the proportional complements setting.

\begin{theorem} \label{thm:lbstandard}
In the pairwise proportional complements setting (\ppc), the better of selling separately at reserve prices and selling the grand bundle can be a factor $n$ off from the optimal revenue, \opt-\ppc.
\end{theorem}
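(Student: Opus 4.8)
The plan is to exhibit a single \ppc instance with $2n$ items on which $\opt\hyphen\ppc(F)=\Omega(n^2)$, while selling separately at reserve prices and grand bundling each earn only $O(n)$, so that the better of the two benchmarks is a factor $\Omega(n)$ (linear in the number of items) below optimal; the entire gap comes from the proportional complementarities. \emph{Construction:} take $n$ \emph{products} $q_1,\dots,q_n$ and $n$ \emph{gifts} $g_1,\dots,g_n$; set $\eta_{q_i g_j}=1$ for every product/gift pair and all other $\eta$'s to $0$, so that owning all $n$ gifts multiplies the contribution of each product by $n+1$, while products never boost products and gifts never boost gifts. Let $q_i$ have the geometrically scaled two-point base value $t_{q_i}=2^i$ with probability $2^{-i}$ and $t_{q_i}=\epsilon$ otherwise, independently over $i$, and let every gift have deterministic base value $\epsilon$, with $\epsilon=2^{-2n}$. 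The design points are: (i) each product alone is ``equal-revenue-like'' and contributes only $\Theta(1)$ to any item pricing; (ii) the products sit at incomparable scales $2^1,\dots,2^n$, so one bundle price can cash in on essentially one of them; (iii) the gifts are individually worthless but collectively inflate every product's value by a $\Theta(n)$ factor.

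\emph{Lower bound on the optimum.} Consider the free-set mechanism that gives all $n$ gifts away for free and posts price $(n+1)2^i$ on $q_i$. Since valuations are monotone the buyer keeps every free gift, after which $q_i$ is worth exactly $(n+1)t_{q_i}$ to him, so he buys precisely the products with $t_{q_i}=2^i$, paying $(n+1)2^i$ for each. Its expected revenue is $\sum_{i=1}^n (n+1)2^i\cdot 2^{-i}=n(n+1)$, and since this mechanism is IC and IR we get $\opt\hyphen\ppc(F)\ge n(n+1)=\Omega(n^2)$.

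\emph{Upper bounds on the benchmarks.} The standalone revenue-optimal price of $q_i$ is $2^i$ (giving revenue $1$) and that of each gift is $\epsilon$. Facing these prices the buyer takes all gifts (nearly free and only helpful) and buys $q_i$ only when $t_{q_i}=2^i$: an inactive product, even fully boosted, is worth at most $(n+1)\epsilon<2^i$, and buying a product boosts nothing else, so there is never an indirect reason to buy one — the product decisions decouple into $n$ independent single-item decisions. Hence the separate-at-reserves revenue is at most $\sum_{i=1}^n 2^i 2^{-i}+n\epsilon=n+o(1)$. For grand bundling, $v(t,[2n])=(n+1)\sum_i t_{q_i}+n\epsilon$, and $\sum_i t_{q_i}$ is within a constant factor of $2^{\max A}$, where $A=\{i:t_{q_i}=2^i\}$; since $\Pr[\max A\ge \ell]\le 2^{-\ell+1}$, every bundle price $q$ satisfies $q\cdot\Pr[v(t,[2n])\ge q]=O(n)$, so the grand-bundle revenue is $O(n)$. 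Combining, the better of the two benchmarks is $O(n)$ while $\opt\hyphen\ppc(F)=\Omega(n^2)$, which is the claimed factor-$\Omega(n)$ gap; rescaling the $2^i$'s (or padding the instance with extra items) sharpens the constant to $1$ if one wants exactly a factor $n$.

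\emph{Main obstacle.} The one step that needs genuine care is the upper bound on selling separately at reserves: a priori the buyer might extract extra surplus by assembling some clever bundle of products under the posted reserve prices. This is precisely what the structural choices rule out — products never boost one another, and an inactive product stays below its reserve even when maximally boosted, so the buyer's product purchases reduce cleanly to $n$ independent single-item decisions and the revenue is pinned to $\Theta(n)$. The rest (the free-set lower bound and the grand-bundle tail estimate) is routine.
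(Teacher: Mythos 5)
Your proof is correct and follows essentially the same route as the paper's own: equal-revenue-style two-point values on the priced items, a $\Theta(n)$ total boost emanating from essentially worthless items that a \freeset-type mechanism gives away for free, the decoupled per-item argument for selling separately, and the standard geometric tail bound for the grand bundle. The only difference is cosmetic---the paper uses $n$ items with a single booster item satisfying $\eta_{i1}=n$ for all $i>1$, whereas you use $n$ unit-boost gifts alongside $n$ products---and your writeup is if anything more careful about why the buyer's purchase decisions decouple under separate pricing.
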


\begin{proof}
Consider the following pairwise proportional complements setting with $n$ items.  Let the buyer's valuation for item $i$ be $t_i = 2^i$ with probability $2^{-i}$ and $t_i = 0$ otherwise.  Let the boost from item 1 onto item $i$ be $\eta_{i1} = n$ for all $i > 1$, and $\eta_{ij} = 0$ for all $j \neq 1$.

Selling each item separately at its monopoly price posts price $2^i$ for item $i$, which sells with a probability of $2^{-i}$, earning expected revenue $1$ for each of the $n$ items.  Hence \srev-\ppc $=n$.

Selling the grand bundle, each item $i > 2$ earns a boost of $n$ on it from item $1$, so any item with $t_i = 2^i$ contributes $n2^i$ to the grand bundle.  However, this still implies that the buyer's value for the grand bundle is over $n2^k$ with probability $2^{-k+1}$, earning expected revenue \brev-\ppc $= O(n)$.

Instead, giving item $1$ away for free and selling items $2, \ldots, n$ separately at their monopoly price inflated by the boost of $n$ that item 1 gives them will earn expected revenue of $n$ for each of the $n-1$ items, hence expected revenue $O(n^2)$.  Then \opt-\ppc $\geq O(n^2)$.
\end{proof}

%The goal is to find a simple pricing scheme that is approximately optimal. 

\section{Main Ideas} \label{sec:mainideas}

\subsection{Pricing scheme}

Almost all of the papers in this line of research
 consider the better of selling each item separately 
 and selling only the grand bundle. 
Pricing the grand bundle is (conceptually) easy: 
 set the monopoly price for 
 the distribution of the buyer's value for the grand bundle, 
 which can be computed from the given input. 
For simple valuations such as additive valuations, 
 setting item prices to sell separately is also easy: 
 set the monopoly reserve for each of them separately. 
In our model, this completely ignores the boost in the valuation on an item 
 from having other items. 
Not surprisingly, this can be provably far from optimum
 when you have complementarities (see Theorem~\ref{thm:lbstandard};
we therefore need a non-trivial way to price the items in this case. 
We first illustrate our algorithm for finding these prices 
 via a numerical example. 
 
%Selling all of the items together in a grand bundle may earning a good approximation to the optimal revenue if the complementarities among all items are strong.  However, in the case where there are only complementarities among some items and the buyer's values do not concentrate, the grand bundle may lose a lot of revenue.  In contrast, the mechanism that sells each item completely separately at its Myerson reserve price is not good for revenue either, as the buyer's value may be much higher due to the complementarities, and a different mechanism may be able to extract this in revenue.
\paragraph{Numerical Example:} Suppose, as shown on the left in Figure~\ref{fig:numex}, that there are 4 items, numbered 1 through 4, and that we have non-zero $\eta$s on the pairs  $(2,1), (3,2), (4,3) $ and $(1,4)$. Let all $\eta$s be 1. 
Suppose $t_1$ and $t_3$ are distributed identically as follows: the value is 2 w.p. $ \frac 1 2$  and 0 otherwise; let $t_2$ and $t_4$ be distributed identically as follows: the value is 4 w.p. $\frac 1 2$ and 0 otherwise. Each $t_i$ is independent of the others. 

We denote the monopoly price and the monopoly revenue for item $i$ alone by $r_i$ and $R_i$ respectively.  
For this example, we have the monopoly prices as 
$r_1 = r_3 = 2$ and $r_2 = r_4 = 4$; 
the revenues are $R_1= R_3 = 1$ and $R_2 = R_4 = 2$. 
Setting the monopoly prices for each item separately 
 guarantees a revenue of $\sum_i R_i = 6$. 
The actual revenue would be higher, but 
 in general it is difficult to get a better handle on it than this bound.

\begin{figure}[h!]
	%	\centering{\includegraphics[scale=.4]{figures/numexvalues} \quad\quad\quad \includegraphics[scale=.32]{figures/numexdicut}}
	\centering{\includegraphics[scale=.4]{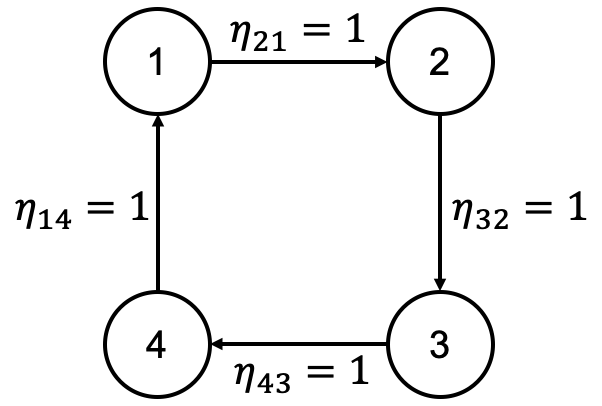} \quad\quad\quad \includegraphics[scale=.26]{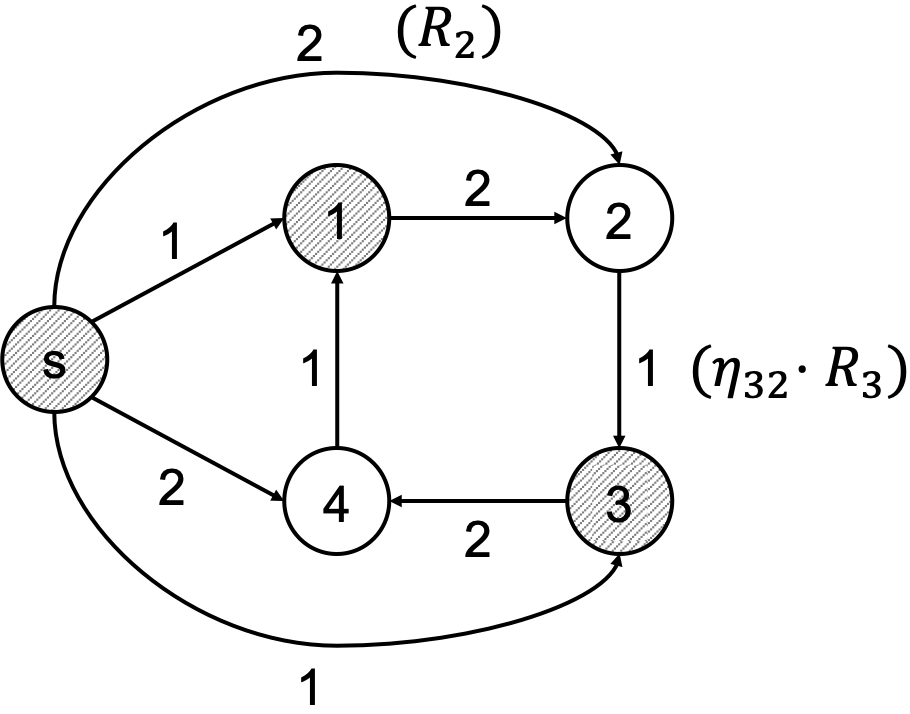}}	
	\caption{Left: The 4-item example described above.  Right: The directed graph where the weight of a directed cut corresponds to a lower bound on the revenue of the corresponding $\freeset$ mechanism.}
	\label{fig:numex}
\end{figure}

\paragraph{Step 1: Construct a weighted directed graph.}
We construct a weighted directed graph with 5 vertices, one for each item, and a  source node $s$. 
The weight on the edge $(s,i)$ is $R_i$. 
The weight on the edge $(i,j)$ is $\eta_{ji} R_j$. 
This graph is shown on the right in  Figure~\ref{fig:numex}. 

\paragraph{Step 2: Find a max directed cut.}
We then find a cut in the graph that maximizes 
 the number of directed edges going from 
 the ``source'' side to the ``sink'' side. 
From the figure, it is easy to see that 
 such a cut is given 
 by the vertices $s,1 $ and 3 on the source side, 
 the rest on the sink side, 
 and has weight 8.
 
\paragraph{Step 3: Set Prices.}
We set the prices for items on each side of the cut differently. 
\begin{enumerate}
	\item The items on the source side have a price of 0. 
	This set of items, denoted by $\F$, are ``free''. 
	In this case, items 1 and 3 are free. 
	\item For the items on the sink side, 
	we multiply the monopoly price $r_i$ by $\eta_i(\F)$ 
	(1 plus the boost $i$ gets from all the items on the source side).
	Then items 2 and 4 thus have a price of 8 each.       
\end{enumerate} 

The weight of the cut, 8, is a lower bound on the revenue of this pricing scheme. 
Each of items 2 and 4 is bought at the price of 8 
 with probability $\frac 1 2$, 
 giving a lower bound on revenue of 8. 
In comparison, the best price for grand bundling is 12, 
 which is bought with probability $\frac 5 8$, 
 giving a revenue of $\tfrac {15} 2$, which is slightly lower. 
Both of these are still higher than the revenue lower bound of 6 
 from setting separate prices of $r_i$ each.

In general, we introduce a class of mechanisms which we call \freeset.  Like selling separately, every item is sold separately at some price, and the buyer may take any set of items so long as he pays the sum of their individual prices.  However, we partition the items into ``free items'' $\F$, where for each item $i \in \F$, the individual price of each such item is $\$0$, and ``priced items'' $\bar \F = [m] \setminus \F$. 
Once the free set  $\F$ is determined, 
 we use the knowledge that the buyer will take the free items 
 to inflate the monopoly prices of the priced items
 by the boost on the item from also getting the free set (and only the free set).

Such mechanisms do capture a certain economic intuition that is seen in practice: giving some items away for free in order to charge more for complementary items, e.g., Google sells the Android OS for free since it is complementary to advertising revenue.
One can also think of it as a certain form of bundling: 
 there is no reason to give away the free items unless 
 the buyer purchases some priced item. 
This is equivalent to bundling all the free items 
 with any non-empty subset of paid items. 
Going back to our cloud services example, 
 such a pricing scheme could determine that 
 one of the two DDs should be free. 
We would then bundle that DD into the VMs; 
such bundles are commonly observed in practice.

\begin{comment}
We consider pricing schemes of the following format: 
there is a set of ``free items'' $\F$ and the remaining are ``priced items'' $\bar \F = [m] \setminus \F$.   
Each item  $i \in \bar{\F}$ is sold separately, except that as long as the buyer buys at least one item in $\bar{\F}$, he gets all the items in $\F$ for free. 
We call this class of mechanisms \freeset.
Such mechanisms do capture a certain economic intuition that is seen in practice, e.g., Google sells the Android OS for free since it is complementary to advertising revenue. 
\end{comment}

One difficulty in the above scheme is that in general, 
 finding a max directed cut in a graph is an \nphard~problem.  
When restricted to polynomial time algorithms, 
 the best worst-case approximation guarantee we can show is 
 by placing each item independently into the free set with some probability $\alpha$, which is determined by $\min\{d,k\}$.
This is a little unsatisfactory since it does not  use the specific market parameters $\eta$ at all. (However, they are used in setting the prices once $\F$ is determined.)  
%We give an alternate way 
An alternative is to use an approximation algorithm
 for the max directed cut problem, such as the Goemans-Williamson algorithm. %, which gives a $0.79...$ approximation. 
The advantage of this method is that it produces a free set that makes use of the structure of the $\eta$s; 
unfortunately, this does not improve the worst case approximation ratio. 
In fact, no algorithm can improve the approximation ratio  
when used in conjunction with our current proof technique, 
 %cannot give a factor of better than 12, so there is currently no advantage to it theoretically.  Nonetheless, 
 but we conjecture that such an algorithm would be better in practice.
%Using the structure of $\eta$'s should give better approximation factors, but such results seem to require new ideas, making it an interesting direction for future research. 
%We illustrate this mechanism on an example below. 

\subsection{Worst case approximation guarantee}
\label{sec:intro_analysis}
Once again, we begin by illustrating our analysis using the numerical example earlier. 
For the sake of analysis, 
 we consider an instance of the pricing problem 
 on the same set of items, 
 with additive valuations.
The value distribution for item $i$ in this instance, 
 denoted by $\hat{F}_i$, 
 is just the original distribution $F_i$ multiplied by 
 $\eta_{i}([m])$, the boost $i$ can obtain from all of the items.
In our example, $\hat{F}_i$ for $i= 1 $ and $3$ is 
 4 w.p. $\frac 1 2$ and 0 otherwise,
 and for $i = 2$ and 4 is 
 8 w.p. $\frac 1 2$ and 0 otherwise. 
 
We relate the revenue from selling separately and selling the grand bundle 
 on the given instance to the corresponding mechanisms for the additive instance. 
It is easy to see that the bundle revenue (\brev) remains the same in both instances, as the complements buyer receives the boost $\eta_{i}([m])$ on every item: 
$$\brev\hyphen\ppc(F) = \brev\hyphen\add(\hat{F}).$$
As we computed earlier, a lower bound on selling separately with our pricing scheme for the given instance is 8. 
Selling separately for the additive instance gives a revenue (\srev) of 12, 
 which is 3/2 times 8. So for this example, we have that 
$$\srev\hyphen\ppc(F) \geq \frac{2}{3}\srev\hyphen\add(\hat{F}).$$
More generally, $\srev\hyphen\add(\hat{F})$
 is equal to the total the weight of all the edges in 
 the  digraph that we construct.
If you place each item on either side of the cut with equal probability, 
 then each edge is cut with probability $\tfrac{1}{4}$, which results in a factor of 4 between the two $\srev$s. 
 This is indeed tight: consider a complete unweighted digraph; 
 any cut can only cut a $\tfrac{1}{4}^{\rm th}$ fraction of edges. 
 % random cut captures the weight of all of the edges with constant probability. 

We can now use a slight generalization\footnote{\citet{BILW} upper bound the optimal revenue by $2\,\srev + 4\, \max\{\srev,\brev\}$.  \citepalias{CDW} improves this to $4\,\srev + 2\,\brev$ with their duality analysis.  In Appendix~\ref{sec:improvedadditive}, we get a parameterized upper bound of $(1+a)\,\srev + (2 + 2/a^2),\brev$, which under $a=1$ gives  2\,\srev + 4\,\brev (as we use in our result), and under $a = \sqrt[3]{4}$ gives $5.382 \, \max\{\srev,\brev\}$.} of the result of \citet{BILW} to bound 
 the optimum revenue for the additive instance, 
 denoted by $\opt\hyphen\add(\hat{F})$, 
 in terms of $\srev$ and $\brev$. 
\[\opt\hyphen\add(\hat{F}) \leq 2\,  \srev\hyphen\add(\hat{F}) + 4\, \brev\hyphen\add(\hat{F}).\]
Finally, we show that the optimum revenue for the additive instance is only higher. 
\[ \opt\hyphen\ppc(F) \leq  \opt\hyphen\add(\hat{F}) , \]
which gives an approximation ratio of 7 for this example, and 12 in general, working through the inequalities above. 
Note that even for additive valuations, 
5.2 is the best known approximation ratio. 

This last step may seem obvious, but it turns out to be quite tricky. 
One might expect a direct argument, 
 that given a mechanism $M$ for the original instance, 
 we construct a mechanism $M'$ for the additive instance, 
 with a larger revenue. 
Such approaches are inherently difficult, 
as evidenced by ``revenue non-monotonicity'' in \cite{hart2012maximal}.  %complicate a direct argument, so 
We instead argue the upper bound by covering the dual of the smaller setting with the dual of the larger setting, a novel use of the \citepalias{CDW} Lagrangian duality framework.

We show the following approximation guarantee more generally. 
\begin{theorem} [Informal]
	\label{thm:maininformal}The better of $\bundle$ and the revenue from a mechanism of type \freeset is an  $O(\min\{d,k\})$-factor approximation to the optimal revenue for valuations in the class \mpph. When $k=1$, i.e., the boosts are the maximum over directed graphs, the approximation factor is at most 12.	
	%Then give all items in $\F$ away for free, and sell the remaining priced items separately.  In fact, it suffices to use the optimal separate prices on the distribution where the types are still drawn from $D$, but each component of a priced item $t_i$ is instead drawn as $\frac{1}{2} \eta_{i[m]} t_i$. 
	
	 Recall that $d$ is the maximum-degree of the hypergraph,
	 and $k$ is the {directed-positive-rank} of the hypergraph. 
\end{theorem}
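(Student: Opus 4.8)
\medskip\noindent\textbf{Proof plan.}
The plan is to reduce an arbitrary \mpph instance to an \emph{additive surrogate} instance and then invoke the additive ``simple versus optimal'' bound. Given the \mpph instance with type distribution $F=\prod_i F_i$ and boosts $\eta$, let $\hat F=\prod_i\hat F_i$ be the additive instance in which $\hat F_i$ is the law of $\eta_i([m])\,t_i$ for $t_i\sim F_i$; this is the instance in which every item is permanently boosted by the largest boost it could ever receive. I would establish the theorem through the chain
\[
\opt\hyphen\mpph(F)\ \le\ \opt\hyphen\add(\hat F)\ \le\ 2\,\srev\hyphen\add(\hat F)+4\,\brev\hyphen\add(\hat F),
\]
together with the identity $\brev\hyphen\add(\hat F)=\brev\hyphen\mpph(F)$ — the buyer's value for the grand bundle equals $\sum_i\eta_i([m])\,t_i$ in both models, so the bundle-pricing problems coincide — and a bound showing that some \freeset mechanism earns a $\tfrac{1}{O(\min\{d,k\})}$ fraction of $\srev\hyphen\add(\hat F)=\sum_i\eta_i([m])R_i$, where $R_i$ is the monopoly revenue of $F_i$. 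The middle inequality is \citet{BILW} applied to the product distribution $\hat F$, in the parameterized form of Appendix~\ref{sec:improvedadditive}. Chaining the pieces and tracking constants yields the $O(\min\{d,k\})$ guarantee; when $k=1$ the \freeset fraction is $\tfrac14$, and the three steps combine to give the bound of $12$ on the better of \bundle and a \freeset mechanism.

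For the \freeset bound I would analyze the \emph{randomized} \freeset mechanism that puts each item independently into the free set $\F$ with probability $\alpha$ (equivalently, the free set is one side of a random directed cut of the weighted digraph from Section~\ref{sec:mainideas}); by averaging, some deterministic $\F$ does at least as well. Two facts drive the estimate. First, for any fixed $\F$ the revenue of this mechanism is at least $\sum_{j\notin\F}\eta_j(\F)\,R_j$: a utility-maximizing buyer may as well take every free item, and by monotonicity of boosts \eqref{eq:monotoneboosts} the marginal value of any priced item $j$ over the rest of the purchased bundle is at least $\eta_j(\F)\,t_j$, so $j$ sells whenever $t_j\ge r_j$ at its inflated price $\eta_j(\F)\,r_j$. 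Second, taking expectations over $\F$ and using that $\eta_j(\F)$ does not depend on whether $j\in\F$, each hyperedge $(T,j)$ active in the boost-maximizing configuration for $[m]$ survives inside $\F$ with probability $\alpha^{|T|}\ge\alpha^{k}$, so $\E_\F[\eta_j(\F)]\ge\alpha^{k}\eta_j([m])$ and $\E_\F[\,\freeset(\F)\,]\ge(1-\alpha)\,\alpha^{k}\sum_i\eta_i([m])R_i$. Optimizing $\alpha$ gives $\alpha=\tfrac{k}{k+1}$ and a $\tfrac{1}{O(k)}$ fraction in general, and $\alpha=\tfrac12$ with the fraction $\tfrac14$ when $k=1$. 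A companion argument — exploiting that each item lies in the source of at most $d$ hyperedges, which caps how much of $j$'s boost is destroyed by keeping $j$ priced rather than free — gives the $\tfrac{1}{O(d)}$ fraction, and we keep whichever of the two is larger.

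The real obstacle is the first inequality, $\opt\hyphen\mpph(F)\le\opt\hyphen\add(\hat F)$. Every \mpph valuation is pointwise dominated by its surrogate, $v(t,S)\le\sum_{i\in S}\eta_i([m])\,t_i$, but this does not let us transport a mechanism from one instance to the other, because a mechanism that is IC for the \mpph buyer need not be IC for the additive buyer and revenue is notoriously non-monotone \citep{hart2012maximal}. Instead I would argue inside the Lagrangian-duality framework of \citetalias{CDW}: the optimal revenue of a single buyer is the value of an LP whose feasible dual solutions — flows on the type graph together with a multiplier measure — each certify an upper bound of the form $\E_t\!\left[\max_S\Phi_\lambda(t,S)\right]$ for a virtual-welfare functional $\Phi_\lambda$. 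The plan is to take an optimal dual $\hat\lambda$ for the surrogate $\hat F$, pull it back along the coordinatewise scaling map $t\mapsto(\eta_i([m])t_i)_i$ to a candidate dual $\lambda$ for the \mpph instance, check that feasibility survives the pullback (the map is a product map, so flow conservation and the marginal constraints carry over), and then use pointwise domination of the valuations to conclude $\E_t[\max_S\Phi_\lambda(t,S)]\le\E_{\hat t}[\max_S\Phi_{\hat\lambda}(\hat t,S)]=\opt\hyphen\add(\hat F)$. Making this ``cover the dual of the smaller problem with the dual of the larger one'' step rigorous — in particular handling the \xos (max-over-configurations) part of the boosts and the fact that the two instances live on different type spaces — is where essentially all the difficulty lies; the remaining steps are bookkeeping.
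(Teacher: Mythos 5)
Your plan follows the paper's proof essentially step for step: the same fully-boosted additive surrogate $\hat F$, the same chain $\opt\hyphen\mpph(F)\le\opt\hyphen\add(\hat F)\le 2\,\srev(\hat F)+4\,\brev(\hat F)$ with $\brev$ preserved, the same lower bound $\sum_{j\notin\F}\eta_j(\F)R_j$ on $\freeset(\F)$, and the same random-cut analysis (the paper uses inclusion probability $1-\tfrac1{2k}$ rather than your $\tfrac{k}{k+1}$, which changes nothing asymptotically and agrees at $k=1$).

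Two points where the execution as written would not go through. First, in the duality step you cannot ``use pointwise domination of the valuations'' to compare the two dual objectives: the Lagrangian virtual welfare $\Phi_\lambda(t,S)=v(t,S)-\tfrac{1}{f(t)}\sum_{t'}\lambda(t',t)\bigl(v(t',S)-v(t,S)\bigr)$ contains valuation differences with both signs, so enlarging every $v$ does not enlarge $\Phi_\lambda$. The paper's fix is structural: for \mpph the virtual welfare factors as $\Phi_\lambda(t,S)=\sum_{i\in S}\eta_i(S)\,\phi_i(t)$ with a per-item virtual value $\phi_i(t)$, one first passes to the positive part $(\phi_i(t))^+$ (justified because the inner maximization zeroes out negative terms anyway), and only then replaces $\eta_i(S)$ by $\eta_i([m])$; the scaling map then carries $\lambda$ to a feasible $\hat\lambda$ with $\hat\phi_i=\eta_i([m])\phi_i$. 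Without the sign argument the replacement $\eta_i(S)\le\eta_i([m])$ points the wrong way on types with negative virtual value. Second, your one-line ``companion argument'' for the $O(d)$ bound does not describe a working construction: per-item independent sampling gives only $\alpha^{k}$ for a source set of size $|T|$ up to $k$, which is useless when $k\gg d$. The paper instead randomizes over \emph{hyperedges}: each hyperedge $(T,i)$ independently, with probability $\tfrac1{2d}$, forces all of $T$ into $\F$; an item stays priced iff none of the at most $d$ hyperedges containing it in their source fires, which happens with probability at least $\bigl(1-\tfrac1{2d}\bigr)^{d}\ge\tfrac12$, so each edge is cut with probability at least $\tfrac1{4d}$. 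With those two repairs your outline matches the paper's argument.
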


%First, we upper bound the optimal revenue from the proportional complements setting by the optimal revenue from an additive setting where valuations are much larger: the buyer's value for each item is formed by taking his base valuation and multiplying it by the boost as if he is receiving all other items, even if he in actuality will receive no other items. 

%From the additive setting, we use a slight improvement of the 6-approximation by \cite{BILW} to bound the optimal revenue by the revenue of selling separately and selling the grand bundle in the fully-boosted additive setting.  The grand bundle revenue is identical in both the fully-boosted additive setting and the complements setting.  However, we need to bound the revenue of selling separately in the fully-boosted additive setting by (a constant factor of) the revenue of Mechanism \freeset in the proportional complements setting.  We draw a connection between a lower bound on the revenue earned by this mechanism for any partition of free and priced items and the weight across a corresponding cut of a directed graph that we construct.   Thus the revenue of the mechanism is a constant-factor approximation to the separate revenue in the fully-boosted additive setting.
%

 We also show that our analysis of   Theorem~\ref{thm:maininformal} is tight up to a constant factor via a lower bound in Theorem~\ref{thm:hypergraphLB}. A crucial step in our analysis is to upper bound the optimal revenue for \classname valuations by the optimal revenue for an instance of additive valuations. 
Further, the actual revenue of a mechanism from a buyer with proportional complements is extremely difficult to analyze.  Instead, we analyze a lower bound on the revenue we deem the ``proxy revenue,'' and we show that with respect to our upper bound, no mechanism of a specific type can give an $o(k)$-approximation to the proxy revenue. The mechanisms we consider first partition the set of items into bundles, designating one bundle as the free set. Each of the other bundles is priced separately. 
The buyer always gets the free set for free.
%The buyer gets the free set for free as long as he buys at least one other bundle. 
Specifically, the price for a bundle is its monopoly reserve price inflated by the boosts of only the other items in its own bundle and by the free set, and not by anything else.  The proxy revenue undercounts the revenue in the same way, by assuming that the buyer's boosted values match the way prices are set in these mechanisms: only within bundles and from the free set.  We elaborate on motivation for using this proxy in Subsection~\ref{subsec:lb}.

\section{A Constant-Factor Approximation via a Random Free Set} \label{sec:freeset}

We begin with the case of pairwise complementarities and show a 12-approximation for this setting. 
%%Along the way, we improve the approximation factor for additive valuations to 5.382 from 6. 

%The series of papers that give simple and approximately-optimal mechanisms does so by exploiting a setting's independence.  However, in our setting, the notion of setting per-item prices here is much more complicated due to the lack of independence.
%
%\yangnote{Yang: I'm not sure the reason that setting prices is difficult is due to lack of independence. It's also difficult to set item prices in the setting of the EC 17 complements paper. It was easy for them if they can set prices on hyperedges, but we cannot even do that.}
%\nikhilnote{I too don't see why this is so. Maybe just remove this.}
%We could always just price the items as if they are additive, without complementarities---but whenever the buyer is willing to buy more than one item, we are not extracting any of his boosted value in the revenue.  To try to do so, we could inflate the price of some item, but this will decrease that item's probability of sale, as well as cause it to depend on the probability that the buyer is willing to buy some other items that will give him enough of a boost to buy this item at its inflated price.  Likewise, the decreased probability of sale of this item now impacts the expected boost on other items.  Choosing how to separately price the items such that we extract boosts that we get when multiple items are purchased, but also sell with good enough probability when fewer items are desired is a complex and unclear optimization problem.

Recall that the two standard mechanisms considered in previous work are selling the grand bundle and selling each item separately. 
Selling the grand bundle only gets better with complements, since we are certain that the buyer will receive all possible boosts, and we can price accordingly. It is selling the items separately that is problematic. 
%Consider selling the items using posted prices. 
A conservative way to set the prices while selling separately is to ignore the complementarities,
 and sell them as if they are just additive; 
this could clearly be quite suboptimal, as shown in Theorem~\ref{thm:lbstandard}. 
%Clearly, this mechanism may give low revenue, since whenever the buyer is willing to buy more than one item, we are not extracting any of his boosted value in the revenue. 
We can price an item more aggressively in order to capture some of the boost from complementarities, 
but this will decrease its probability of sale, which can further decrease the probabilities of sale for other items that receive a boost from this item. 
The pricing must get the right tradeoff between capturing more of the boost from complementarity while making sure that sufficient quantity of items are sold in the first place in order for the boosts to accrue. 
% In particular, how shall we separately price the items such that we extract boosts that we get when multiple items are purchased, but also sell with good enough probability when fewer items are desired?
Overall, it is difficult to characterize the behavior of the buyer, which makes optimizing the prices extremely challenging.

%We show that the better of the following simple mechanism and the grand bundling mechanism suffices to obtain a constant fraction of the optimal revenue. 
%So instead of optimizing the prices for the items, we can use one of the two simple mechanisms. 
Our approach is to shift the focus away from optimizing prices. 
We do this by giving some items away for free, and then just selling the remaining items individually as if they are additive, but accounting the boost from the items that are given for free. 
The free items make sure that sufficient boosts accrue; the priced items extract the value thus generated. 
The problem now becomes one of choosing the set of free items, but in fact we show that a random choice suffices. 
%We prove that if we carefully choose which set of items to give out for free, 
The analysis compares the revenue to a seemingly crude upper bound, where every item receives the fullest boost that an item could possibly receive---the boost on the item if the buyer were to receive all of the items, that is, the grand bundle.

We now formally  describe our mechanism \freeset. For each item $i \in [m]$, let $r_i^*$ be the monopoly reserve for the distribution $F_i$, i.e., 
\[ r_i^* = \arg \max_{p \in \reals_+}  p\cdot \left(1 - F_i(p)\right) ,\]
and let $R_i$ be the revenue of the monopoly reserve for the distribution $F_i$,
\[ R_i := r_i^*\cdot\left(1- F_i(r_i^*)\right). \]

\noindent{\bf Mechanism} $\freeset(\F)$ {\bf:}
 Partition the items into ``free items'' $\F$ and ``priced items'' $\bar \F = [m] \smallsetminus \F$.  
% by placing each item independently into the free set  $\F$ with probability $\alpha$ and the priced set $\bar \F$ with probability $1-\alpha$.  
%Let $ B_i = \F \cup \{i\}$.  
The price of a priced item $i \in \bar{\F}$ is  \[p_i = \etaset {\F} \cdot r_i^* .\]
%As long as the buyer buys at least one item in $\bar{\F}$, he gets all of the items in $\F$ for free.  
The buyer gets all of the items in $\F$ for free, that is, they are priced each at 0.  The buyer purchases the set of items that, at these posted prices, maximizes his utility.  We denote by $\freeset(\F)$ the expected revenue from the mechanism with (potentially random) free set $\F$, and we overload notation slightly to use $\freeset = \max _{\F \subseteq [m]} \freeset(\F)$. 

\begin{theorem} 
	\label{thm:main}The better of selling the grand bundle and Mechanism \freeset is a 12-approximation for \ppc valuations: 
	$$ \opt\hyphen\ppc \leq 12 \max\{\brev\hyphen\ppc,\freeset\hyphen\ppc\}.$$ 
	%Then give all items in $\F$ away for free, and sell the remaining priced items separately.  In fact, it suffices to use the optimal separate prices on the distribution where the types are still drawn from $D$, but each component of a priced item $t_i$ is instead drawn as $\frac{1}{2} \etaset {[m]} t_i$. 
	\end{theorem}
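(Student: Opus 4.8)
The plan is to pass to an auxiliary \emph{additive} instance and chain together four inequalities. Let $\hat F = \prod_{i=1}^m \hat F_i$ be the product distribution where $\hat F_i$ is the law of $\eta_i([m])\cdot t_i$ for $t_i\sim F_i$; that is, each item's value is inflated by the largest boost it could ever receive. I would prove: (i) $\brev\hyphen\ppc(F) = \brev\hyphen\add(\hat F)$; (ii) $\freeset\hyphen\ppc(F) \ge \tfrac14\,\srev\hyphen\add(\hat F)$; (iii) $\opt\hyphen\add(\hat F) \le 2\,\srev\hyphen\add(\hat F) + 4\,\brev\hyphen\add(\hat F)$; and (iv) $\opt\hyphen\ppc(F) \le \opt\hyphen\add(\hat F)$. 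Combining these, $\opt\hyphen\ppc(F) \le \opt\hyphen\add(\hat F) \le 2\,\srev\hyphen\add(\hat F) + 4\,\brev\hyphen\add(\hat F) \le 8\,\freeset\hyphen\ppc(F) + 4\,\brev\hyphen\ppc(F) \le 12\max\{\brev\hyphen\ppc,\freeset\hyphen\ppc\}$, which is the claim.

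Step (i) is immediate: when the buyer receives the grand bundle, item $i$ contributes exactly $\eta_i([m])\,t_i$, so the buyer's value for the whole bundle has the same distribution in both instances, and grand bundling just posts the monopoly reserve of that distribution. Step (ii) is the combinatorial core. I would first show that $\freeset(\F)$ collects at least $\sum_{j\in\bar\F}\eta_j(\F)\,R_j$ in expectation: item $j$ is posted at $\eta_j(\F)\,r_j^*$; since the free set only adds value, in any profitable purchase the buyer holds $\F$, so the marginal value of also taking $j$ is at least $\eta_j(\F)\,t_j$, hence $j$ is purchased whenever $t_j\ge r_j^*$, which has probability $1-F_j(r_j^*)$. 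Expanding $\eta_j(\F) = 1+\sum_{i\in\F}\eta_{ji}$ identifies $\sum_{j\in\bar\F}\eta_j(\F)R_j$ with the weight of the directed cut $(\{s\}\cup\F,\ \bar\F)$ in the digraph with weights $w(s,j)=R_j$ and $w(i,j)=\eta_{ji}R_j$; the total weight of all edges is $\sum_i\eta_i([m])R_i = \srev\hyphen\add(\hat F)$. Placing each item in $\F$ independently with probability $\tfrac12$ puts edge $(s,j)$ in the cut with probability $\tfrac12$ and edge $(i,j)$ with probability $\tfrac14$, so the expected cut weight — and therefore $\freeset\hyphen\ppc(F)$, taken over the best (possibly randomized) free set — is at least $\tfrac14\,\srev\hyphen\add(\hat F)$.

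Step (iii) is the slightly generalized single-additive-buyer bound of \citet{BILW} with parameter $a=1$ (see the footnote above / Appendix~\ref{sec:improvedadditive}), so no new argument is needed there. The main obstacle is step (iv): the optimal \ppc mechanism earns no more than the optimal mechanism for the \emph{larger} additive instance $\hat F$. A direct transformation — take an arbitrary \ppc mechanism and build an additive mechanism with weakly more revenue — is exactly the kind of argument obstructed by revenue non-monotonicity, so instead I would invoke the \citetalias{CDW} Lagrangian-duality framework (as in Lemma~\ref{lem:duality}): express the duality-based upper bound on $\opt\hyphen\ppc(F)$ via a virtual-value / flow assignment over \ppc types, and exhibit a feasible dual solution for the additive instance $\hat F$ that dominates it term by term — morally, any virtual surplus that the \ppc dual extracts from a type $t$ is already paid for by the additive dual at the corresponding type whose per-item values are the fully boosted $\eta_i([m])\,t_i$. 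Making the covering precise — matching the dual variables, verifying all feasibility constraints, and absorbing the fact that the optimal \ppc allocation is an uncontrolled function of $t$ — is where essentially all the work lies; I would isolate it as a standalone lemma and then read off the theorem from the displayed chain of inequalities.
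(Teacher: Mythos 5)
Your proposal is correct and follows essentially the same route as the paper: the same four ingredients (the fully-boosted additive instance $\hat F$, the equality of \bundle revenues, the directed-cut/random-free-set lower bound giving $\srev(\hat F)\le 4\,\freeset\hyphen\ppc(F)$, and the duality-based benchmark $\opt\hyphen\ppc(F)\le\opt\hyphen\add(\hat F)$), combined via the identical chain $\opt\hyphen\ppc\le 8\,\freeset\hyphen\ppc+4\,\brev\hyphen\ppc\le 12\max\{\cdot\}$. The step you flag as the main obstacle is indeed the one the paper isolates as Lemma~\ref{lem:additiveboost} and proves by exactly the dual-covering argument you outline, setting $\hat\lambda(\eta_{[m]}\circ t',\eta_{[m]}\circ t)=\lambda(t',t)$ and handling negative virtual values by passing through $(\phi_i(t))^+$.
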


\subsection{Proof of Theorem \ref*{thm:main}} \label{sec:mainppc}
The proof of this theorem is largely along the lines of the analysis 
 described in Section \ref{sec:intro_analysis}.
We first relate $\opt\hyphen\ppc$ to the optimal revenue for an instance of additive valuations; where the buyer's valuation for each item is inflated as if he receives the boosts from owning every possible item in addition to this one, even if he receives no additional items.  Then, the buyer's new (much larger) valuations are additive.  We refer to this setting as the fully-boosted additive setting, where we call the values $t$ multiplied by the full boosts as drawn from the distribution $\hat{F}$, even though $t$ is drawn identically as from $F$.  That is, $\Pr_{t_i \sim F_i}[t_i \leq x] = \Pr_{\hat{t}_i \sim \hat{F}_i}[\hat{t}_i \leq \etaset{[m]} x]$.  We show that the revenue from this setting is only larger than from the proportional complements setting.

\begin{lemma} \label{lem:additiveboost} 
\[ \opt\hyphen\ppc(F) \leq  \opt\hyphen\add(\hat{F}) . \]
\end{lemma}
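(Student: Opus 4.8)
The plan is to prove this inequality \emph{without} ever turning a \ppc mechanism into a fully-boosted additive one: as the excerpt notes, ``revenue non-monotonicity'' makes such a direct primal argument treacherous. Instead I would work entirely inside the \citepalias{CDW} Lagrangian duality framework (Lemma~\ref{lem:duality}). Recall that for a single buyer with type distribution $D$, density $f$, and valuation function $v$, this framework attaches to every valid dual \emph{flow} $\lambda$ (on the poset of types, augmented with a source node) a Lagrangian virtual value
\[
\Phi_\lambda(t,S)\;=\;v(t,S)\;+\;\tfrac{1}{f(t)}\sum_{t'}\lambda(t,t')\bigl(v(t,S)-v(t',S)\bigr)\;-\;\tfrac{1}{f(t)}\sum_{t'}\lambda(t',t)\bigl(v(t',S)-v(t,S)\bigr),
\]
and guarantees the weak-duality bound $\opt(D)\le\E_{t\sim D}\bigl[\max_S\Phi_\lambda(t,S)\bigr]$, with equality for the optimal flow (strong duality). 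Here I have used that with one buyer and no supply constraints the inner maximization over feasible randomized allocations reduces to the pointwise $\max_S$ (a linear objective on a simplex is attained at a vertex, and $S=\emptyset$, with $\Phi_\lambda(t,\emptyset)=0$, is always available), and that the feasible allocation polytope is \emph{the same} in the \ppc and the additive instances since they share the item set $[m]$.

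The first real step is an elementary but decisive observation: for a \emph{fixed} bundle $S$, the \ppc valuation $v(t,S)=\sum_{i\in S}\etaset{S}\,t_i$ is \emph{linear} in the type $t$, with coefficients $\etaset{S}$ that do not depend on $t$; likewise the fully-boosted additive valuation is $\hat v(\hat t,S)=\sum_{i\in S}\hat t_i=\sum_{i\in S}\etaset{[m]}\,t_i$. The rescaling map $\phi:t\mapsto\hat t$, $\hat t_i=\etaset{[m]}\,t_i$, is an order-preserving bijection of the two type spaces that preserves probability masses, so any valid flow $\hat\lambda$ for the additive instance $\hat F$ pulls back to a valid flow $\lambda:=\hat\lambda\circ(\phi\times\phi)$ for the \ppc instance $F$ (flow conservation and the source conditions are linear in $\lambda$ with coefficients depending only on the poset structure and on the densities $f(t)=\hat f(\phi(t))$, all of which $\phi$ respects; continuous types are handled by the usual discretization/limiting argument; the individual-rationality contributions vanish since the minimal type is $\mathbf{0}$, at which $v(\cdot,S)=\hat v(\cdot,S)=0$). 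Plugging the linear forms of $v$ and $\hat v$ into $\Phi_\lambda$ and factoring $\etaset{S}$ (resp.\ $\etaset{[m]}$) out of every term yields the twin identities
\[
\Phi^{\ppc}_{\lambda}(t,S)\;=\;\sum_{i\in S}\etaset{S}\,\psi_i(t),\qquad
\Phi^{\add}_{\hat\lambda}(\hat t,S)\;=\;\sum_{i\in S}\etaset{[m]}\,\psi_i(t),
\]
where $\psi_i(t):=t_i+\tfrac{1}{f(t)}\sum_{t'}\lambda(t,t')(t_i-t'_i)-\tfrac{1}{f(t)}\sum_{t'}\lambda(t',t)(t'_i-t_i)$ is the \emph{same} ``un-boosted'' virtual value in both expressions.

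With these identities the lemma essentially falls out. Let $\hat\lambda^{*}$ be an \emph{optimal} flow for $\opt\hyphen\add(\hat F)$ and $\lambda_0$ its pullback. By monotonicity of the boosts, $0<\etaset{S}\le\etaset{[m]}$ for every $i\in[m]$ and every $S$, so for every type $t$ and every bundle $S$,
\[
\sum_{i\in S}\etaset{S}\,\psi_i(t)\;\le\;\sum_{i\in S}\etaset{S}\,\psi_i(t)^{+}\;\le\;\sum_{i\in S}\etaset{[m]}\,\psi_i(t)^{+}\;\le\;\sum_{i\in[m]}\etaset{[m]}\,\psi_i(t)^{+}\;=\;\max_{S'}\Phi^{\add}_{\hat\lambda^{*}}(\hat t,S'),
\]
where the final equality uses that $\Phi^{\add}_{\hat\lambda^{*}}(\hat t,\cdot)$ is additive across items. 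Hence $\max_S\Phi^{\ppc}_{\lambda_0}(t,S)\le\max_{S'}\Phi^{\add}_{\hat\lambda^{*}}(\hat t,S')$ pointwise; taking expectations over $t$ (equivalently over $\hat t=\phi(t)$) and combining weak duality for \ppc with strong duality for the additive instance gives
\[
\opt\hyphen\ppc(F)\;\le\;\E_{t}\bigl[\max_S\Phi^{\ppc}_{\lambda_0}(t,S)\bigr]\;\le\;\E_{\hat t}\bigl[\max_{S'}\Phi^{\add}_{\hat\lambda^{*}}(\hat t,S')\bigr]\;=\;\opt\hyphen\add(\hat F).
\]

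I expect the main obstacle to be less in the ideas than in setting up the \citepalias{CDW} machinery at the right level of generality: invoking the duality theorem for our combinatorial (non-additive, \mpph) valuations rather than the additive case it is usually quoted for, and carefully checking that flow validity, the vanishing of the individual-rationality contributions, and the reduction of $\max_x$ to a pointwise maximum all transport across the rescaling bijection $\phi$ (the continuous-distribution case needs the most care). Everything downstream of ``$v(\cdot,S)$ is linear in $t$ for fixed $S$'' is routine algebra, and the crucial inequality uses nothing beyond $0<\etaset{S}\le\etaset{[m]}$.
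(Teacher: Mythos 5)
Your proposal is correct and takes essentially the same route as the paper: both arguments live entirely in the \citepalias{CDW} dual space, transport dual variables across the rescaling bijection $t\mapsto \eta_{[m]}\circ t$ so that the Lagrangian virtual value factors as a boost times an unboosted virtual value, pass through the positive part $(\cdot)^+$ to justify replacing $\etaset{S}$ by $\etaset{[m]}$, and finish with strong duality on the additive side. The only (immaterial) difference is direction --- the paper pushes each \ppc dual forward and takes the min at the end, while you pull back the optimal additive dual and invoke weak duality for \ppc.
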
 

This is a very loose upper bound and intuitively it should be true: for every type $t$, the buyer's value for every set in the fully-boosted additive setting is only larger than in the proportional complements setting.  However, due to revenue non-monotonicities, the proof requires more care, and is deferred to Subsection~\ref{subsec:ub}.

%A modification of the factor 6-approximation by \citet{BILW} gives the following lemma, whose proof we defer to the appendix. 

In Appendix~\ref{sec:improvedadditive}, we improve the analysis of the $6$-approximation by \citet{BILW} to allow a parameterization in the bound\footnote{This analysis also improves the 6-approximation to 5.382.  The state of the art coefficient is 5.2 \citet{MSL}, but our proof uses the \citepalias{CDW} framework and is more modular.}.  Then our Theorem~\ref{thm:improved additive} with $a=1$ gives that
\[\opt\hyphen\add(\hat{F}) \leq 2\,  \srev(\hat{F}) + 4\, \brev(\hat{F}).\]

%\begin{lemma}\label{lem:srevbrev} 
%	\[ \add_{\eta_{[m]}\circ  t} \leq 3 \srev_{\eta_{[m]}\circ  t}+ 3\brev_{\eta_{[m]}\circ  t}. \]
%\end{lemma} 
%It is easy to see that the revenue of the grand bundling mechanism in the complements setting is exactly the same as in the fully-boosted additive setting. %$\brev_{\eta_{[m]}\circ  t}$: both achieve the full $\etaset {[m]}$ boost with certainty, and no additional boosts. 
It is easy to see that the revenue from grand bundling in the complements setting on the original distribution is the same as the grand bundling in the fully-boosted additive setting, i.e., $\brev\hyphen\ppc(F) = \brev\hyphen\add(\hat{F})$, as the buyer receives the full boosts in both cases.
It now remains to show that Mechanism \freeset on $F$ is a 4-approximation to $ \srev (\hat{F})$, despite the fact that the prices in the fully-boosted additive setting are each inflated by \emph{full} boost of getting the grand bundle. 
\begin{lemma}  \label{lem:4srev}
	$$ \srev(\hat{F}) \leq 4\, \freeset\hyphen\ppc(F) . $$ 
\end{lemma}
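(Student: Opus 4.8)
The plan is to exhibit a single (random) choice of free set $\F$ whose expected proxy revenue is at least $\frac14\,\srev(\hat F)$, which suffices since $\freeset \ge \freeset(\F)$ for every fixed $\F$. Recall that in the fully-boosted additive setting each item $i$ is priced at its own monopoly reserve, so $\srev(\hat F) = \sum_{i\in[m]} \eta_i([m])\,R_i$, where $R_i$ is the monopoly revenue for $F_i$. The target is therefore to show $\sum_i \eta_i([m])\,R_i \le 4\,\freeset(F)$ for some distribution over free sets.

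First I would put each item independently into $\F$ with probability $\tfrac12$. Fix a priced item $i\in\bar\F$; its price in $\freeset(\F)$ is $p_i = \eta_i(\F)\,r_i^*$, and a sufficient condition for the buyer to purchase item $i$ is simply that $t_i \ge r_i^*$ — the buyer's marginal value for $i$, given whatever else he buys, is at least $\eta_i(\F)\,t_i \ge \eta_i(\F)\,r_i^* = p_i$, using monotonicity of the boosts \eqref{eq:monotoneboosts} (here $\F$ is contained in the realized purchased set, so $\eta_i(\text{purchased set}) \ge \eta_i(\F)$; one must be slightly careful that the purchased set indeed contains $\F$, which holds because the free items are weakly dominant to take). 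Hence item $i$ contributes at least $p_i\cdot(1-F_i(r_i^*)) = \eta_i(\F)\,R_i$ to revenue whenever $i\in\bar\F$. Taking expectations over $\F$, the contribution of item $i$ is at least $\Pr[i\in\bar\F]\cdot \E[\eta_i(\F)\mid i\in\bar\F]\cdot R_i$. In the pairwise case $\eta_i(\F) = 1 + \sum_{j\in\F}\eta_{ij} \ge \sum_{j\in\F\setminus\{i\}}\eta_{ij}$, and by linearity of expectation $\E[\eta_i(\F)\mid i\in\bar\F] \ge 1 + \tfrac12\sum_{j\ne i}\eta_{ij} \ge \tfrac12\big(1 + \sum_{j\ne i}\eta_{ij}\big) = \tfrac12\,\eta_i([m])$. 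Since $\Pr[i\in\bar\F]=\tfrac12$, item $i$'s expected contribution is at least $\tfrac14\,\eta_i([m])\,R_i$. Summing over $i$ and noting the full mechanism's revenue is at least the sum of per-item contributions gives $\E_\F[\freeset(\F)] \ge \tfrac14\sum_i \eta_i([m])\,R_i = \tfrac14\,\srev(\hat F)$, and hence $\freeset \ge \tfrac14\,\srev(\hat F)$.

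The step I expect to be the main obstacle is justifying the claimed lower bound on the revenue of $\freeset(\F)$ — namely, that for a given realized free set, the buyer really does purchase every priced item $i$ with $t_i \ge r_i^*$, and that his total payment is at least the sum of the individual prices of those items. The subtlety is that the buyer's purchase decisions are coupled across items (buying more items raises the boosts on the others), so one cannot naively sum per-item guarantees; the right argument is that since all prices are nonnegative and values (with boosts) are monotone, the buyer's utility-maximizing bundle is upward-closed in the obvious sense, and in particular contains $\F$ together with every priced $i$ whose standalone contribution $\eta_i(\F)\,t_i$ exceeds $p_i$. Once this structural fact is nailed down, the payment lower bound follows because the buyer pays the sum of prices of items he takes. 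The boost-monotonicity inequality \eqref{eq:monotoneboosts} is exactly what makes the "$\eta_i(\text{purchased})\ge\eta_i(\F)$" comparison go through even when $K>1$, so the same argument extends to \mpph\ valuations with the factor $4$ replaced appropriately; but for Lemma \ref{lem:4srev} as stated we only need the pairwise bookkeeping above.
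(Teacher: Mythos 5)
Your proposal is correct and follows essentially the same route as the paper: lower-bound the revenue of $\freeset(\F)$ by $\sum_{i\in\bar\F}\eta_i(\F)R_i$ (each priced item sells whenever $t_i\ge r_i^*$ since the free-set boost is guaranteed and additional purchases only help), then take a uniformly random partition so each pairwise boost term survives with probability $1/4$ and each base term with probability $1/2$. The only cosmetic difference is that the paper packages the expectation computation as the expected weight of a random directed cut in an explicit weighted digraph, whereas you carry out the same per-item averaging directly.
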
 

\begin{proof} 
First, we derive a lower bound on the revenue from Mechanism \freeset for any partition of the items into free and priced.   
What revenue do we yield for the partition $(\F,\bar \F)$? Recall that for every item $i\in \bar{\F}$, the price posted is $\eta_i(\F) \cdot r_i^*$.  The probability that the buyer purchases item $i$ is at least $\Pr[t_i\geq r_i^*]=1-F_i(r^*_i)$, because the buyer receives the boost $\etaset {\F}$ from all the free items with certainty. If the buyer also purchases other items, it will only increase the buyer's value for buying item $i$, so the probability of purchasing item $i$ can only increase. 
Hence, the revenue of mechanism \freeset under this particular partition $(\F,\bar \F)$ is at least 
$$\sum_{i\in \bar\F} \etaset{\F}\cdot r_i^*\cdot \left(1-F_i(r^*_i)\right) =  \sum _{i \in \bar \F} \etaset{\F} R_i.$$

Now we construct a graph and show that the revenue of Mechanism \freeset under any partition $(\F,\bar \F)$ of the items is at least the weight of a corresponding directed cut in the following graph.
%When the free set $\F$ is realized and each item $i \in \bar\F$ is priced at $\etaset {\F} r_i$, the revenue earned is $\sum _{i \in \bar\F} \etaset {\F} R_i$.  This is precisely the weight of the cut from $\F$ to $\bar{\F}$ in the following graph.  
Consider the graph with vertices $[m]$ corresponding to the $m$ items, where directed edge $(j,i)$ has weight $w_{j,i} := \eta_{ij} \cdot R_i$, where $R_i$ is the optimal revenue for selling only item $i$.  The graph also contains a source node $s$, where for all items $i \in [m]$, the edge $(s,i)$ has weight $w_{s,i} = R_i$.  (This will account for the coefficient 1 for the base valuation of the item.)  The weight of the directed cut from $\F + \{s\}$ to $\bar\F$ is precisely: 
$$\sum _{i \not \in \F} \sum _{j \in \F + \{s\}} w_{j,i} =  \sum _{i  \in \bar \F} \left (1 +  \sum _{j \in \F} \eta_{ij} \right) R_i = \sum _{i \in \bar \F} \etaset{\F} R_i.$$
Hence, for any partition of free and priced items $(\F,\bar \F)$, the weight of the directed cut from $\F + \{s\}$ to $\bar\F$ gives a lower bound on the revenue yielded by Mechanism \freeset for this partition.

We construct our free set by placing each item independently and uniformly at random into $\F$ or $\bar \F$.
The expected weight of the corresponding random cut from $\F + \{s\}$ to $\bar{\F}$ is at least $\frac{1}{4} \sum _{i\in[m]} \etaset {[m]}\cdot  R_i = \frac 1 4 \srev_{\eta_{[m]}\circ  t}$\footnote{Recall that $\circ$ is the Hadamard Product of two vectors.}.  
To see this, observe that for every pair of items $(j,i)$, the cut gets the weight of $\eta_{ij} R_i$ from this edge whenever $j \in \F$ and $i \not \in \F$, which occurs with probability $\frac{1}{4}$.
The cut also gets a weight of $R_i$ whenever $i \in \bar{\F}$, which happens with probability $\tfrac 1 2$.

\end{proof}

Theorem~\ref{thm:main} now follows from Lemmas~\ref{lem:additiveboost} and \ref{lem:4srev}, and Theorem~\ref{thm:improved additive} with $a=1$:
\begin{align*}
	\opt\hyphen\ppc(F) &\leq \opt\hyphen\add(\hat{F}) \\
	&\leq 2\,  \srev(\hat{F}) + 4\, \brev(\hat{F}) \\
	&\leq 8\,  \freeset\hyphen\ppc(F) + 4\, \brev\hyphen\ppc(F) .
\end{align*}

	%Optimizing the Free Set via Directed Max Cut
\iffalse 
	\paragraph{A Connection to Directed Max Cut:}
	In the proof of Lemma~\ref{lem:4srev}, we show an equivalence between a lower bound on the revenue of a \freeset mechanism and the weight of a directed cut on the constructed graph.  Optimizing the free set according to the graph is more practical (and satisfying) since it would adapt the bundles to the structure of the market, as captured by the $\etaij$s and the $R_i$s. 
	This problem, called the \dicut,   
%	In this problem, we are given a directed graph with edge weights, and the goal is to find a cut in the graph, that is, a partition of its vertices into two sets $S$ and $\bar S$. 
%	The goal is to maximize the total weight of edges going from $S$ to $\bar S$, i.e., edges $(u,v)$ where $u\in S$ and $v \in \bar S$. 
	has a well known approximation algorithm that gives a 1/.79607-approximation \citep{GWmaxcut}.
	However our proof of approximation does not use the value of the optimum cut, but rather the quantity $\sum _i \etaset {[m]} R_i$, which is the sum of all edge weights.  
	It is not clear whether the approximation algorithm will return a cut that is a 4-approximation to this quantity; all we know is that the optimum cut is a 4-approximation to it (since the random cut is). 
	Further, we show that this factor of 4 is tight with respect to our upper bound, in the sense that the optimum cut can be as far as a factor 4 away from this quantity. 
	Consider the complete graph with edges in both directions: the optimum  can only cut about a quarter of all edges.  
\fi	
\subsection{Proof of the Benchmark} \label{subsec:ub}	

We now prove Lemma~\ref{lem:additiveboost}: that the optimal revenue from the proportional complements setting is bounded by the optimal revenue from the fully-boosted additive setting.  Again, while this is intuitive, revenue non-monotonicities make it unclear how to execute a direct proof.  Instead, we use the machinery from the Lagrangian duality  framework of \citet{CDW} to give a ``dual-covering'' argument.  While the argument is simple and easy-to-see for those familiar with the machinery, the machinery itself is not easy.

%We use the Lagrangian duality  framework of \citet{CDW} to bound $\opt\hyphen\mpph$ in terms of its Lagrangian duals.  
First, we formulate the (primal) optimization problem: maximize revenue subject to incentive-compatibility, individual rationality, and feasibility.
We have Lagrangian dual variables, denoted by $\lambda$, corresponding to each IC constraint, i.e., corresponding to each pair of types $(t,t')$.  Then the Lagrangian duality framework states that, via strong duality, optimal revenue is equal to the optimal dual minimization problem, and upper bounded by any feasible dual.

Of the vast array of works that use the Lagrangian duality framework to achieve an upper bound for approximation \citep{CDW, CZ, BrustleCWZ17, EFFTW, EdenFFTW17b, FuLLT17, LiuP17}, the standard approach used by almost all of them is to select dual variables for the setting at hand that naturally split the upper bound into terms that can be bounded by a few simple mechanisms.  Then, the bulk of the work remains in bounding the unique terms with the correct mechanisms.  Here, however, it is not even clear how to chose a set of dual variables that induces a good upper bound due the complementarities across items. We take a different path. We first create a new proxy additive setting, where buyers' valuations are fully-boosted. We then argue that the optimal revenue in our setting is upper bounded by the optimal revenue in the boosted additive setting. As the buyers' valuations in the boosted setting ``dominate'' the original buyers' valuations, the claim is intuitively true. However, due to revenue non-monotonicities, this intuition does not directly translate to a proof. We rely on duality to prove the claim. We show that the optimal dual in the original setting is at most the optimal dual in the fully-boosted additive setting, which by strong duality, is equal to the optimal revenue. This step is the only place we use duality and the rest of the analysis all happens in the primal/mechanism space.
 
We use $\phi_i(t) := t_i -  \frac{1}{f(t)} \sum _{t'}  (t_i' - t_i) \lambda(t', t)$ as the ``virtual value function'' given by $\lambda$.  
Let $f(t)$ denote the probability that the type $t$ is realized. (We assume discrete distributions for simplicity of notation.)
We denote the set of feasible allocations by $\P$---this is just the set that allocates at most one unit of each good.
The following lemma is a direct application of Theorem 4.4 of~\citet{CZ} to our setting and gives the optimal revenue in terms of these dual variables. 
\begin{lemma} \label{lem:duality} 
\[ \opt\hyphen\mpph =  \min _{\lambda\geq 0} \max _{x\in\P} \sum _i \sum_t f(t) \phi_i(t) \sum _{S: i \in S} x_S(t) \etaset S.\]
\end{lemma}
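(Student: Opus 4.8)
The plan is to invoke the Lagrangian duality framework of \citet{CDW} almost as a black box, specializing Theorem 4.4 of \citet{CZ} to the \mpph setting. Recall the setup: we write the revenue-maximization problem as a linear program over allocation probabilities $x_S(t)$ and payments $p(t)$, with the objective $\sum_t f(t) p(t)$, subject to the IC constraints (for every pair of types $(t,t')$, the buyer of type $t$ weakly prefers the menu entry for $t$ over that for $t'$), the IR constraints, and the feasibility constraint $x(t) \in \P$ for every $t$. The key point specific to our model is that the buyer's value for the allocation $x(t)$ is \emph{linear} in the allocation probabilities: since $v(t,S) = \sum_{i\in S} \etaset{S}\, t_i$, the expected value of a (randomized) allocation $x(t)$ to a buyer with type $t'$ is $\sum_{S} x_S(t) \sum_{i \in S} \etaset{S}\, t'_i = \sum_i t'_i \sum_{S : i \in S} x_S(t)\,\etaset{S}$. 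So the relevant ``interim allocation'' for item $i$ at type $t$ is $\pi_i(t) := \sum_{S: i \in S} x_S(t)\,\etaset{S}$, which plays exactly the role that $x_i(t)$ plays in the additive analysis of \citet{CZ}.

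First I would take the Lagrangian dual with multipliers $\lambda(t',t) \ge 0$ on the IC constraint ``type $t$ does not want to report $t'$,'' and an IR multiplier that (by the standard argument) gets folded in by treating the empty outcome as the report of a null type; this is exactly the bookkeeping done in \citet{CZ}. Strong duality holds because the primal is a finite-dimensional LP (we have assumed discrete distributions), so $\opt\hyphen\mpph = \min_{\lambda \ge 0} \max_{x \in \P} L(\lambda, x)$ where $L$ is the partial Lagrangian. Next I would simplify $L(\lambda,x)$: the payment terms, after summing the Lagrangian, force the usual ``flow'' constraints on $\lambda$ (in-flow equals out-flow plus $f(t)$ at each node $t$), and under those constraints the payments cancel out and the objective collapses to $\sum_t \sum_i f(t)\,\phi_i(t)\,\pi_i(t)$, where $\phi_i(t) = t_i - \frac{1}{f(t)}\sum_{t'} (t'_i - t_i)\,\lambda(t',t)$ is precisely the virtual-value function defined in the excerpt. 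Substituting $\pi_i(t) = \sum_{S : i \in S} x_S(t)\,\etaset{S}$ gives the claimed expression $\sum_i \sum_t f(t)\,\phi_i(t) \sum_{S : i \in S} x_S(t)\,\etaset{S}$, and the outer $\min_\lambda \max_{x \in \P}$ is inherited from strong duality.

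The main obstacle — really the only nontrivial content — is verifying that the specialization of \citet{CZ}'s Theorem 4.4 is legitimate, i.e., that the hypotheses of that theorem are met by the \mpph setting. The two things to check are: (i) the buyer's valuation, as a function of the (randomized) allocation, is linear, which holds because $\etaset{S}$ is a fixed coefficient once $S$ is fixed, so $v(t,\cdot)$ extends linearly to distributions over bundles (the boosts $\eta$ are data, not decision variables); and (ii) the feasibility polytope $\P$ (at most one copy of each item) is the same downward-closed polytope considered there. Both are immediate, so I would state the lemma as ``a direct application'' and relegate the reader to \citet{CZ} for the duality derivation itself, noting only that the sole model-specific input is the linear reparametrization $x_S(t) \mapsto \pi_i(t)$. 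I would also remark that this identity holds verbatim for \ppc and \pph as special cases ($K=1$, or $|T|\le 1$), so we can use it uniformly. One subtlety worth a sentence: the max over $x \in \P$ is over \emph{all} feasible allocation rules without any IC constraint — those have been dualized away — which is exactly why the inner maximization is tractable term-by-term in the eventual approximation argument.
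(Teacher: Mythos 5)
Your proposal is correct and follows essentially the same route as the paper: both invoke Theorem~4.4 of \citet{CZ} as a black box and specialize it to \mpph by observing that $v(t,S)=\sum_{i\in S}\etaset{S}t_i$ is linear in the coordinates of $t$ with fixed coefficients $\etaset{S}$, so the Lagrangified objective $\Phi(t,S)$ factorizes as $\sum_{i\in S}\etaset{S}\phi_i(t)$. Your reparametrization through $\pi_i(t)=\sum_{S:i\in S}x_S(t)\,\etaset{S}$ is just a repackaging of the same algebra the paper carries out explicitly, so no further comparison is needed.
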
 
%\nikhilnote{Change this for constrained additive.}
%\fix{$\lambda \geq 0$, $x \in \P$}

This lemma allows us to move back and forth between the revenue in the primal space and a bound in the dual space.

\begin{proof}
Theorem 4.4 of~\citet{CZ} states that the optimal revenue from a buyer with type $t \in T$ and \emph{any} valuation $v(t,S)$ for the set $S$ is as follows, where $x(t,S)$ is the primal variable for the probability that the buyer receives exactly set $S$ when he reports type $t$:
$$\opt\hyphen v(\cdot,\cdot) = \min _{\lambda\geq 0} \max _{x\in\P} \sum_t f(t) \Phi(t,S) x_S(t)$$
where $$\Phi(t,S) = v(t,S) - \frac{1}{f(t)} \sum _{t' \in T} \lambda(t',t) (v(t',S) - v(t,S)).$$
In our setting, we have that $v(t,S) = \sum _{i \in S} \etaset{S} t_i$.  Thus
\begin{align*}
\Phi(t,S) &= v(t,S) - \frac{1}{f(t)} \sum _{t' \in T} \lambda(t',t) (v(t',S) - v(t,S)) \\
&= \sum _{i \in S} \etaset{S} t_i - \frac{1}{f(t)} \sum _{t' \in T} \lambda(t',t) \left(\sum _{i \in S} \etaset{S} t'_i - \sum _{i \in S} \etaset{S} t_i\right) \\
&= \sum _{i \in S} \etaset{S} \left( t_i - \frac{1}{f(t)} \sum _{t' \in T} \lambda(t',t) \left(t'_i - t_i\right) \right)\\
&= \sum _{i \in S} \etaset{S} \phi_i(t)
\end{align*}
and the above claim holds.\footnote{The theorem from \citet{CZ} also holds for multiple buyers, as does a restatement of Lemma~\ref{lem:duality}; we only state it for a single buyer for simplicity.}  Note that this also applies to the additive setting, where for all $i$, $\eta_{ij} = 0$ for all $j$ and $\etaset{S} = 1$.
\end{proof}

%%%%%%%%%%

We first relate $\opt\hyphen\ppc$ to the optimal revenue for an instance of additive valuations; 
in essence we just multiply the value $t_i$ by $\etaset {[m]}.$
We set up some notation first. 
Define  $\eta_{[m]}$ to be the vector whose $i^{\rm th}$ coordinate is  $(\eta_{[m]})_i = \etaset {[m]}$,  and let $ \eta_{[m]}\circ t$ be the Hadamard product of the vector $\eta_{[m]}$ and the vector $t$.
Let $\hat{F}$  be the distribution where $\eta_{[m]}\circ t$ is drawn identically to $t$ in $F= \Pi_i F_i$, i.e., $\hat{f}\left(\eta_{[m]}\circ t\right) = f(t)$. We refer to this setting as the fully-boosted additive setting.
%Let $\add(\hat{F})$ be the optimal revenue in the fully-boosted additive setting. %w.r.t. $\hat{F}$.

\begin{proof}[Proof of Lemma~\ref{lem:additiveboost}]

For each $i$ and allocation rule $x$, by the monotonicity in (\ref{eq:monotoneboosts}), the boost from $[m]$ is larger than that from any set $S$, i.e.,  $\etaset {S} \leq \etaset{[m]}$. Thus, we have that 
\begin{equation} \label{etaim-bound} 
\sum _{S: i \in S} x_S(t) \etaset {S} \leq \etaset {[m]} \sum _{S: i \in S} x_S(t) = \etaset {[m]} \pi_i(t), 
\end{equation} where we define $\pi_i(t) := \sum _{S: i \in S} x_S(t)$ to be the probability that item $i$ is allocated to a buyer of type $t$. 
We now have the following sequence of equalities and inequalities. 
The first line uses Lemma~\ref{lem:duality} to move to the dual space. 
We would like to replace $\etaset S$ by $\etaset {[m]}$ everywhere (using \cref{etaim-bound}), but this is not possible since the virtual value function may be negative on some types. 
Lines 2 and 3 do this by using only non-negative virtual valuations as an upper bound.
We use $z^+$ to denote $\max\{z,0\}$ for any real number $z$. 
In line 4 we can bring back the original (possibly negative) virtual value function because 
in order to maximize this quantity, the optimal $\pi$ must set $\pi_{i}(t) = 0$ when $\phi_{i}(t) < 0$.
Line 5 then moves to the dual space for the fully-boosted additive setting, by suitably defining the dual variables there. 
(The exact duals are defined below.)
Line 6 uses Lemma \ref{lem:duality} once again to come back to the primal, $\opt\hyphen\add(\hat{F})$. 
\begin{align*}
	\opt\hyphen\ppc(F) &=  \min _{\lambda\geq 0} \max _{x\in \P} \sum _i \sum_t f(t) \phi_i(t) \sum _{S: i \in S} x_S(t) \etaset S &\text{by  Lemma~\ref{lem:duality} }\\
	& {\leq \min _{\lambda\geq 0} \max _{x\in \P} \sum _i \sum_t f(t) \left(\phi_i(t)\right)^+ \sum _{S: i \in S} x_S(t) \etaset S} \\
	& {\leq \min _{\lambda\geq 0} \max _\pi \sum _i \sum_t f(t) \left(\phi_i(t)\right)^+\cdot  \etaset {[m]} \pi_i(t)} & \text{by \cref{etaim-bound}}\\
	&=  \min _{\lambda\geq 0} \max _\pi \sum _i \sum_t f(t) \phi_i(t) \etaset {[m]} \pi_i(t)\\
	&= \min _{\lambda\geq 0} \max _\pi \sum _i \sum _{\eta_{[m]}\circ  t} \hat{f}\left(\eta_{[m]}\circ t\right) \hat{\phi}_i\left(\eta_{[m]}\circ t\right) \pi_i\left(\eta_{[m]}\circ t\right) & \text{by \cref{eq:inflatedduals}}\\
	&= \opt\hyphen\add(\hat{F}) &\text{by  Lemma~\ref{lem:duality}}.
\end{align*}

The equality in line 5 is true because if we set the dual variable $\hat\lambda(\eta_{[m]}\circ t', \eta_{[m]}\circ t) =  \lambda(t', t)$ in the fully-boosted additive setting, $\hat{\lambda}$ still corresponds to a feasible dual variable\footnote{For readers familiar with \citetalias{CDW}, $\hat{\lambda}$ still corresponds to a flow.}. Therefore, it induces the following virtual value function:
\begin{align}
	\hat\phi_i\left(\eta_{[m]}\circ t\right) &= \etaset {[m]} \circ t_i - \frac{1}{\hat{f}\left(\eta_{[m]}\circ t\right)} \sum _{\eta_{[m]}\circ  t'} \left(\etaset {[m]} t_i' - \etaset {[m]} t_i\right) \hat\lambda\left(\eta_{[m]}\circ t', \eta_{[m]} \circ t\right) \nonumber \\
	& = \etaset {[m]} t_i - \frac{1}{f(t)} \sum _{t'} \etaset {[m]} \left(t_i' - t_i\right) \lambda(t', t) \nonumber \\
	&= \etaset {[m]} \phi_i(t).
\label{eq:inflatedduals} \end{align}
\end{proof}

\subsection{XOS Complementarities} \label{sec:xos}

For simplicity, our analysis is written for additive boosts.  However, the extension to XOS boosts is fairly straight-forward.  Recall that $\etaset S = 1 + \max_{\ell \in [K]} \sum _{T \subseteq  S\setminus \{i\}}  \etamph T$.
As shown in \cref{eq:monotoneboosts}, XOS boosts are also monotone, so the upper bound from using $\etaset{[m]}$ holds.  We modify our graph construction from the proof of Lemma~\ref{lem:4srev} as follows.  Define $\ell_i^* \in \argmax _{\ell \in [K]} \sum _{j \in [m] \smallsetminus \{i\}} \eta_{ij}^{\ell}$; then $\etaset{[m]} = 1 + \sum _{j \in [m] \smallsetminus \{i\}} \eta_{ij}^{\ell_i^*}$.  Then in the XOS analysis, the directed edge $(j,i)$ has weight $w_{j,i} := \eta_{ij}^{\ell_i^*} \cdot R_i$.  A cut from $\{s\} \cup \F$ to $\bar \F$ will have thus have weight 
$$\sum _{i \not \in \F} \sum _{j \in \F + \{s\}} w_{j,i} =  \sum _{i  \in \bar \F} \left (1 +  \sum _{j \in \F} \eta_{ij}^{\ell_i^*} \right) R_i \leq \sum _{i \in \bar \F} \etaset{\F} R_i.$$
That is, the weight of the cut is a lower bound on the revenue of the mechanism with free set $\F$ and items in $\bar{\F}$ priced accordingly, using the actual $\etaset {\F}$'s.  Since a uniformly random $\F$ guarantees a cut of weight $\frac{1}{4} \sum_i \etaset{[m]} \cdot R_i$ in expectation, then the expected revenue is again at least as high.

Similarly, in Lemmas~\ref{lem:4ksrev} and \ref{lem:4dsrev}, the same modification of using $w_{T,i} := \eta_{iT}^{\ell_i^*}$ on edges $(T,i)$ will guarantee that the weight of any cut is again a lower bound on the corresponding \freeset revenue, so our random cut constructions give the same guarantees under XOS boosts as well.

Finally, it is not hard to see that even when the boosts are XOS functions, the revenue of selling the grand bundle is still the same as the fully-boosted additive $\brev(\hat{F})$.

\begin{theorem} 
	\label{thm:XOS boosts}The better of selling the grand bundle and Mechanism \freeset is a 12-approximation to the optimal revenue for XOS complementarities. 
	%Then give all items in $\F$ away for free, and sell the remaining priced items separately.  In fact, it suffices to use the optimal separate prices on the distribution where the types are still drawn from $D$, but each component of a priced item $t_i$ is instead drawn as $\frac{1}{2} \etaset {[m]} t_i$. 
	\end{theorem}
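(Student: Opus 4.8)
The plan is to mirror the proof of Theorem~\ref{thm:main} essentially line for line, substituting the XOS-aware ingredients assembled in the discussion just above. Three facts get chained together: (a) $\opt\hyphen\ppc(F) \le \opt\hyphen\add(\hat F)$, the fully-boosted additive upper bound of Lemma~\ref{lem:additiveboost}; (b) $\opt\hyphen\add(\hat F) \le 2\,\srev(\hat F) + 4\,\brev(\hat F)$, which is Theorem~\ref{thm:improved additive} with $a=1$ and is a statement about a single additive buyer that does not care how $\hat F$ arose; and (c) $\srev(\hat F) \le 4\,\freeset\hyphen\ppc(F)$ together with $\brev(\hat F) = \brev\hyphen\ppc(F)$. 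Combining them yields
\[ \opt\hyphen\ppc(F) \le 2\,\srev(\hat F) + 4\,\brev(\hat F) \le 8\,\freeset\hyphen\ppc(F) + 4\,\brev\hyphen\ppc(F) \le 12\max\{\freeset\hyphen\ppc(F),\,\brev\hyphen\ppc(F)\}. \]

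For (a): I would note that Lemma~\ref{lem:duality} is already stated for \mpph valuations, and its proof only uses $v(t,S) = \sum_{i\in S}\etaset{S} t_i$; the sole extra input to Lemma~\ref{lem:additiveboost} is the monotonicity $\etaset{S} \le \etaset{[m]}$, which is exactly \cref{eq:monotoneboosts} and holds for XOS boosts. So the dual-covering argument transfers verbatim: relabel each type $t$ by $\eta_{[m]}\circ t$, keep the dual flow $\hat\lambda(\eta_{[m]}\circ t',\eta_{[m]}\circ t) = \lambda(t',t)$, and check as in \cref{eq:inflatedduals} that the induced virtual value scales by $\etaset{[m]}$. The grand-bundle identity in (c) is immediate: a buyer who takes all of $[m]$ realizes the full boost $\etaset{[m]}$ on every item regardless of whether that boost is additive or XOS, so his value for the grand bundle is $\sum_i \etaset{[m]} t_i$, matching the additive value under $\hat F$.

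The one step that needs an actual argument is the \freeset bound in (c), i.e., the XOS analogue of Lemma~\ref{lem:4srev}. The subtlety: Mechanism~$\freeset(\F)$ prices item $i \in \bar\F$ at $\etaset{\F} r_i^*$ using the true boost $\etaset{\F} = 1 + \max_{\ell}\sum_{j\in\F}\eta_{ij}^{\ell}$, whereas the graph used to lower-bound its revenue must commit to one index per item. I would fix $\ell_i^* \in \argmax_{\ell}\sum_{j\in[m]\smallsetminus\{i\}}\eta_{ij}^{\ell}$ and put weight $w_{j,i} := \eta_{ij}^{\ell_i^*} R_i$ on edge $(j,i)$ and $w_{s,i} := R_i$ on $(s,i)$. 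For any partition, the cut from $\{s\}\cup\F$ has weight $\sum_{i\in\bar\F}\bigl(1 + \sum_{j\in\F}\eta_{ij}^{\ell_i^*}\bigr) R_i \le \sum_{i\in\bar\F}\etaset{\F} R_i$, since $\sum_{j\in\F}\eta_{ij}^{\ell_i^*} \le \max_{\ell}\sum_{j\in\F}\eta_{ij}^{\ell} = \etaset{\F} - 1$; and $\sum_{i\in\bar\F}\etaset{\F} R_i$ is a revenue lower bound for $\freeset(\F)$ exactly as in Lemma~\ref{lem:4srev}, because every free item is taken with certainty so item $i\in\bar\F$ sells with probability at least $1 - F_i(r_i^*)$ at its posted price $\etaset{\F} r_i^*$. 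On the other hand, for the full set $[m]$ the index $\ell_i^*$ is by construction the maximizer, so $\sum_i \etaset{[m]} R_i$ equals the total edge weight; placing each item in $\F$ independently and uniformly cuts $(j,i)$ with probability $\tfrac14$ and $(s,i)$ with probability $\tfrac12$, so the expected cut weight is at least $\tfrac14\sum_i\etaset{[m]}R_i = \tfrac14\,\srev(\hat F)$, giving $\srev(\hat F) \le 4\,\freeset\hyphen\ppc(F)$.

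I do not expect a genuine obstacle: the whole point is that the XOS extension is ``plug-in.'' The single point of care is the mismatch between the fixed index $\ell_i^*$ baked into the graph and the set-dependent maximizer used when actually pricing in $\freeset(\F)$; that mismatch is harmless because underestimating the boost on priced items only shrinks the cut weight (so it stays a valid revenue lower bound), while the random-cut computation needs only the full-set boost, for which $\ell_i^*$ is exactly the right choice.
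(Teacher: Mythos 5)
Your proposal is correct and matches the paper's own proof essentially step for step: the same chain $\opt \le \opt\hyphen\add(\hat F) \le 2\,\srev(\hat F) + 4\,\brev(\hat F) \le 8\,\freeset + 4\,\brev$, with monotonicity \cref{eq:monotoneboosts} justifying the fully-boosted upper bound and the fixed maximizer $\ell_i^*$ defining the edge weights so that any cut underestimates the true $\etaset{\F}$-inflated revenue while the random cut still recovers $\tfrac14\sum_i\etaset{[m]}R_i$. The ``single point of care'' you flag — the mismatch between the fixed index $\ell_i^*$ and the set-dependent maximizer — is exactly the point the paper's Section on XOS complementarities addresses, and you resolve it the same way.
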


\section{Extension to MPPH}%\mpph}
As in the previous subsection, the extension to the boosts being a maximum over many hypergraphs (\mpph) comes for free, and the the analysis is identical to Subsection~\ref{sec:xos}.  For simplicity of presentation, we focus on the proportional positive hypergraphic (\pph) valuation class, and show how to extend the mechanism and the analysis to this more general valuation class.  Recall that $\eta_{iT}$ may be defined for any subset $T \in [m] \smallsetminus \{i\}$, and that $\etaset{S} = 1 + \sum _{T \subseteq S\setminus\{i\}} \eta_{iT}$.  Also recall that $k$ is the directed-positive-rank of the hypergraph, and $d$ is the maximum-out-degree.
%We assume that the valuations are just additive (and not constrained-additive), for ease of notation.  

%In addition, we consider the more general setting where for any set of up to $k$ items, e.g. Excel, Powerpoint, Outlook, and so on, the buyer can gain some additional value for the added functionality that the set together lends to another item, such as using Word.  For this setting, instead of modeling the boosts as directed edges in a graph, we model them as directed hyperedges in a hypergraph of degree at most $k+1$.  This is a generalization of the class of $\mathcal{PH}$-$(k+1)$ where the edges $(T,i)$ are directed from a set $T$ of at most $k$ items to a single item $i$. 
%\nikhilnote{What?}
 %Under this model, the buyer's value for a set of items $S$ factors in the boost of every hyperedge contained in $S$, that is, $$v(S,t) = \sum _{i \in S} t_i \left(1 + \sum _{T \in {S\smallsetminus \{i\} \choose \leq k}} \eta_{iT} \right)$$ where $\eta_{iT}$ is the weight of the hyperedge directed from $T$ to $i$.  
 %As before, for any set $S$, we define $\etaset { S}$ to be the boost of item $i$ when $S$ is purchased, that is, $\etaset{S} = \sum _{T \in {S\smallsetminus \{i\} \choose \leq k}} \eta_{iT}$.  This also defines $\eta_{i [m]}$, the boost on item $i$ when all other items are also received.
The distribution for the fully-boosted additive setting $\hat{F}$ is defined as before, except with $\etaset{[m]}$ defined according to the  \pph valuations. 
%Then Lemma \ref{lem:additiveboost} goes through for this case pretty much as is, and we need to prove the equivalent of Lemma \ref{lem:4srev}.

Lemma~\ref{lem:additiveboostmpph} states that the fully-boosted additive setting is again a crude upper bound on revenue; it is the analog of Lemma~\ref{lem:additiveboost} for \pph valuations and can be proven similarly.
\begin{lemma} \label{lem:additiveboostmpph} 
	\[ \opt\hyphen\pph(F) \leq  \opt\hyphen\add(\hat{F}) . \]
\end{lemma}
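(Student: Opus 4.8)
The plan is to mimic the proof of Lemma~\ref{lem:additiveboost} verbatim, since the only change is that the boost term $\etaset{S}$ is now defined via hyperedges rather than single edges, and nothing in that argument actually used the pairwise structure. First I would invoke Lemma~\ref{lem:duality} (which is stated for \mpph, hence covers \pph) to rewrite $\opt\hyphen\pph(F)$ in the dual space as $\min_{\lambda\geq 0}\max_{x\in\P}\sum_i\sum_t f(t)\phi_i(t)\sum_{S:i\in S}x_S(t)\etaset S$. The key structural fact I need is that the boosts are monotone in the set, i.e., $\etaset S\leq\etaset{[m]}$ for all $S$; this is exactly \cref{eq:monotoneboosts}, which was already established for the general \mpph boosts (including the additive-boost/\pph case as $K=1$), so it transfers with no extra work.

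Then I would carry out the same six-line chain of (in)equalities as in the proof of Lemma~\ref{lem:additiveboost}: pass to $(\phi_i(t))^+$ as an upper bound; replace $\etaset S$ by $\etaset{[m]}$ using monotonicity and collapse $\sum_{S:i\in S}x_S(t)$ into the marginal $\pi_i(t)$; restore the possibly-negative $\phi_i(t)$ since the maximizing $\pi$ zeroes out negative coordinates; move to the dual of the fully-boosted additive setting by setting $\hat\lambda(\eta_{[m]}\circ t',\eta_{[m]}\circ t)=\lambda(t',t)$ and checking (as in \cref{eq:inflatedduals}) that the induced virtual value is $\hat\phi_i(\eta_{[m]}\circ t)=\etaset{[m]}\phi_i(t)$; and finally apply Lemma~\ref{lem:duality} once more, now in the additive direction ($\eta_{ij}=0$, $\etaset S=1$), to land on $\opt\hyphen\add(\hat F)$. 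The distribution $\hat F$ here is defined exactly as before but with $\etaset{[m]}$ computed from the \pph valuations, as stated in the text preceding the lemma.

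The only place any genuine checking is needed is the step asserting that $\hat\lambda$ remains a feasible dual variable in the additive setting --- i.e., that reindexing the dual variables by the coordinatewise-scaled types $\eta_{[m]}\circ t$ preserves the flow/feasibility conditions of the \citetalias{CDW} framework. But this is a purely combinatorial relabeling of the type space by a fixed positive diagonal scaling, independent of the hypergraph's rank or degree, so it goes through identically to the pairwise case; I would simply point to the argument around \cref{eq:inflatedduals}. I do not expect any real obstacle: since the proof of Lemma~\ref{lem:additiveboost} never touched the $\eta_{ij}$'s beyond through the aggregate quantities $\etaset S$ and the monotonicity \cref{eq:monotoneboosts}, the generalization to \pph (and then to \mpph, where $\etaset S$ is a max over hypergraphs and monotonicity still holds by \cref{eq:monotoneboosts}) is immediate. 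If anything, the ``hard part'' is purely expository --- deciding how much of the near-identical calculation to reproduce versus how much to replace with ``the proof of Lemma~\ref{lem:additiveboost} applies mutatis mutandis.''
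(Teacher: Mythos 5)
Your proposal is correct and matches the paper exactly: the paper itself gives no separate argument for this lemma, stating only that it ``is the analog of Lemma~\ref{lem:additiveboost} for \pph valuations and can be proven similarly,'' which is precisely the mutatis-mutandis replay of the duality chain you describe. You correctly identify that the only ingredients used in that chain are Lemma~\ref{lem:duality} (already stated for \mpph) and the monotonicity of boosts in \cref{eq:monotoneboosts} (already established for general hyperedge boosts), so nothing further is needed.
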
 

Next, we prove an analog of Lemma~\ref{lem:4srev} which shows that we can obtain a $4k$-approximation to $\srev(\hat{F})$.
\begin{lemma}  \label{lem:4ksrev}
	$$ \srev(\hat{F}) \leq 4k\, \freeset\hyphen\pph(F) . $$
	%There exists a choice of free set $\F$ such that Mechanism \freeset for \pph valuations is a $4k$-approximation to $ \srev(\hat{F}). $ 
\end{lemma}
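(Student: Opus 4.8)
The plan is to mirror the proof of Lemma~\ref{lem:4srev}, replacing the directed graph by a directed \emph{hypergraph} and biasing the random free set by an amount that depends on the directed-positive-rank $k$; this bias is the only genuinely new ingredient, and everything else transcribes almost verbatim.

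First I would record the per-partition revenue lower bound, exactly as in the pairwise case. In Mechanism $\freeset(\F)$ each priced item $i\in\bar\F$ carries price $p_i=\etaset{\F}\,r_i^*$, where $\etaset{\F}=1+\sum_{T\subseteq\F\setminus\{i\}}\eta_{iT}$. Since the items of $\F$ are free and the boosts are monotone (\cref{eq:monotoneboosts}), the buyer's utility-maximizing bundle $S^*$ may be taken to contain $\F$; then whenever $t_i\ge r_i^*$, adding $i$ to $S^*$ raises the value attributed to item $i$ by at least $\etaset{S^*\cup\{i\}}\,t_i\ge\etaset{\F}\,t_i\ge p_i$ while only raising the values of the other items, so the buyer (weakly) buys $i$. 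Hence item $i$ sells with probability at least $1-F_i(r_i^*)$, which gives
$$\freeset(\F)\ \ge\ \sum_{i\in\bar\F}\etaset{\F}\,r_i^*\bigl(1-F_i(r_i^*)\bigr)\ =\ \sum_{i\in\bar\F}\etaset{\F}\,R_i.$$

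Next I would build a weighted directed hypergraph on vertex set $[m]\cup\{s\}$: put weight $w_{T,i}:=\eta_{iT}R_i$ on each hyperedge $(T,i)$ of the underlying hypergraph, and weight $R_i$ on the edge $(\{s\},i)$ for each $i$. Calling a hyperedge $(T,i)$ \emph{cut} by the partition $(\F,\bar\F)$ when $T\subseteq\F$ and $i\in\bar\F$ (and $(\{s\},i)$ cut when $i\in\bar\F$), the weight of the cut with $\F\cup\{s\}$ on the source side equals
$$\sum_{i\in\bar\F}R_i+\sum_{(T,i):\,T\subseteq\F,\ i\in\bar\F}\eta_{iT}R_i\ =\ \sum_{i\in\bar\F}\Bigl(1+\sum_{T\subseteq\F\setminus\{i\}}\eta_{iT}\Bigr)R_i\ =\ \sum_{i\in\bar\F}\etaset{\F}\,R_i,$$
which is precisely the bound above, so the cut weight lower-bounds $\freeset(\F)$ for every $\F$. (It matters here that the \pph price inflates only by hyperedges lying entirely inside $\F$, matching the ``$T\subseteq\F$'' cut condition.) I would also note that in the fully-boosted additive setting item $i$ has value $\etaset{[m]}t_i$, so selling separately earns $\srev(\hat F)=\sum_i\etaset{[m]}R_i=\sum_i R_i+\sum_{(T,i)}\eta_{iT}R_i$, i.e.\ the sum of all edge weights of this hypergraph.

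Finally I would draw the free set by placing each item into $\F$ independently with probability $p:=k/(k+1)$. For a hyperedge $(T,i)$ we have $|T|\le k$ and $i\notin T$, so $\Pr[T\subseteq\F\text{ and }i\in\bar\F]=p^{|T|}(1-p)\ge p^k(1-p)$, while $\Pr[i\in\bar\F]=1-p\ge p^k(1-p)$. Hence $\E_\F[\freeset(\F)]$ is at least the expected cut weight, which is at least $p^k(1-p)\bigl(\sum_i R_i+\sum_{(T,i)}\eta_{iT}R_i\bigr)=p^k(1-p)\,\srev(\hat F)$. The proof then reduces to the elementary inequality $p^k(1-p)=\bigl(\tfrac{k}{k+1}\bigr)^k\tfrac1{k+1}\ge\tfrac1{4k}$, equivalently $(1+1/k)^{k+1}\le 4$, which holds for every integer $k\ge 1$ since the left-hand side is decreasing in $k$ with value $4$ at $k=1$ and limit $e$. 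Thus some $\F$ achieves $\freeset(\F)\ge\frac1{4k}\srev(\hat F)$, i.e.\ $\srev(\hat F)\le 4k\,\freeset\hyphen\pph(F)$. I expect everything except this last step to transcribe directly from Lemma~\ref{lem:4srev}; the one point requiring thought is the choice of the biasing probability $p=k/(k+1)$ together with the inequality $(1+1/k)^{k+1}\le 4$, which is what pins the constant at $4k$ — the naive symmetric choice $p=1/2$ would give only a bound exponential in $k$.
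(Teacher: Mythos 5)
Your proposal is correct and follows essentially the same route as the paper: the same per-partition revenue lower bound, the same weighted directed hypergraph whose cut weight equals $\sum_{i\in\bar\F}\etaset{\F}R_i$, and the same idea of a $k$-biased independent random free set making every hyperedge cut with probability at least $\tfrac{1}{4k}$. The only (immaterial) difference is the bias: you use $p=k/(k+1)$ with the inequality $(1+1/k)^{k+1}\le 4$, while the paper places each item in $\F$ with probability $1-\tfrac{1}{2k}$ and uses $(1-\tfrac{1}{2k})^{k}\ge\tfrac12$ together with $\Pr[i\in\bar\F]=\tfrac{1}{2k}$; both yield the identical $\tfrac{1}{4k}$ bound.
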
 

\begin{proof}
We use a random construction of the free set, and we show that the expected revenue of our mechanism is at least a  $1/4k$-fraction of $\srev(\hat{F})$.  Each item independently is free (in $\F$) with probability $(1- \frac{1}{2k})$, and otherwise it is priced.  By definition of the directed-positive-rank, for every given $\eta_{iT}$, $|T| \leq k$.  Then for any such $T$, all items in $T$ appear simultaneously in $\F$ with probability $(1 - \frac{1}{2k})^{|T|} \geq (1 - \frac{1/2}{k})^{k} \geq \frac{1}{2}$.  In addition, every item $i \in \bar\F$ with probability $\frac{1}{2k}$. 

Consider the graph construction where a directed edge $(T,i)$ has weight $w_{T,i} := \eta_{iT} \cdot R_i$ and we have an edge $(s,i)$ for every item $i$ with weight $w_{s,i} := R_i$.  Every edge is cut from $\{s\} \cup \F$ to $\bar \F$ with probability $\geq \frac{1}{2} \cdot \frac{1}{2k} = \frac{1}{4k}$.  The expected weight of the cut from $\{s\} \cup \F$ to $\bar \F$ is then $\geq \frac{1}{4k} \sum _i \etaset{[m]} \cdot R_i = \frac{1}{4k} \cdot \srev(\hat{F})$.  

Again, as in the proof of Lemma~\ref{lem:4srev}, we observe that the expected revenue of Mechanism \freeset with partition $(\F, \bar\F)$ achieves at least as much revenue as the directed cut from $\{s\} + \F$ to $\bar\F$, and thus the mechanism obtains the $\frac{1}{4k}$-approximation.
\end{proof}

We prove in the next Lemma that there is a different way to choose the free set to obtain a $4d$-approximation to $\srev(\hat{F})$.
\begin{lemma}  \label{lem:4dsrev}
	$$ \srev(\hat{F}) \leq 4d\, \freeset\hyphen\pph(F) . $$
	%There exists a choice of free set $\F$ such that Mechanism \freeset for  \pph valuations is a $4d$-approximation to $ \srev(\hat{F}). $ 
\end{lemma}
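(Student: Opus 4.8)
The plan is to follow the proof of Lemma~\ref{lem:4ksrev} verbatim except for the construction of the random free set, which I will make \emph{correlated} so that it exploits the out-degree bound $d$ instead of the rank bound $k$. Some correlation is unavoidable here: if each item is placed in $\F$ independently with a fixed probability $p$, then a hyperedge $(T,i)$ is cut only with probability $p(1-p)^{|T|}$, and since $|T|$ can be as large as $k$, such a construction cannot do better than $\Theta(k)$ when $k \gg d$. The right idea is that once an item $i$ is chosen to be priced, we should \emph{force} the tail $T$ of every hyperedge $(T,i)$ into the free set; the out-degree bound is exactly what makes these forced assignments mutually consistent except on a small-probability event.

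Concretely, I would mark each item $i \in [m]$ independently with probability $\tfrac{1}{2d}$, and then declare an item $j$ to be \emph{priced} (i.e.\ $j \in \bar\F$) if and only if $j$ is marked and no hyperedge $(T,i)$ with $j \in T$ has a marked sink $i$; otherwise $j$ is free. This is a well-defined function of the marks. The two facts I need are, for every hyperedge $(T,i)$: (a) $\Pr[i \in \bar\F] \ge \tfrac{1}{4d}$, and (b) $\Pr[\,i \in \bar\F \text{ and } T \subseteq \F\,] \ge \tfrac{1}{4d}$. For (a): the hyperedges $(T'',i'')$ with $i \in T''$ all have sink $i'' \ne i$ (disjoint union), so the mark of $i$ is independent of the marks of all those sinks; there are at most $d$ such hyperedges, so by a union bound, with probability at least $1 - d\cdot\tfrac{1}{2d} = \tfrac12$ none of their sinks is marked, and multiplying by $\Pr[i \text{ marked}] = \tfrac{1}{2d}$ gives $\tfrac{1}{4d}$ (and $i \in \bar\F$ is exactly this event). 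For (b): whenever $i$ is marked, every $j \in T$ lies in the tail of the hyperedge $(T,i)$, whose sink $i$ is marked, so $j$ is free by definition; hence $\{i \in \bar\F\} \subseteq \{T \subseteq \F\}$ and the probability in (b) equals the one in (a).

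Given these facts, the remainder is identical to Lemma~\ref{lem:4ksrev}: form the graph with a vertex per item and a source $s$, putting $w_{T,i} = \eta_{iT} R_i$ on the directed edge $(T,i)$ and $w_{s,i} = R_i$ on $(s,i)$. Then the expected weight of the random cut from $\{s\}\cup\F$ to $\bar\F$ is
\begin{align*}
\sum_i R_i \Pr[i \in \bar\F] + \sum_{(T,i)} \eta_{iT} R_i \Pr[\,i\in\bar\F,\ T\subseteq\F\,]
&\ge \frac{1}{4d}\Big(\sum_i R_i + \sum_{(T,i)}\eta_{iT}R_i\Big)\\
&= \frac{1}{4d}\sum_i \etaset{[m]} R_i = \frac{1}{4d}\,\srev(\hat{F}),
\end{align*}
and since for every realized partition the revenue of Mechanism \freeset is at least the weight of the corresponding cut (as argued in Lemmas~\ref{lem:4srev} and~\ref{lem:4ksrev}), taking expectations and then the best partition gives $\freeset\hyphen\pph(F) \ge \tfrac{1}{4d}\,\srev(\hat{F})$.

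The only nontrivial step is the design of the correlated free set together with the observation that the forced assignments conflict only on an event of probability at most $\tfrac12$ — this is precisely where the out-degree $d$, rather than $k$, enters — which is exactly facts (a) and (b) above; the graph-cut bookkeeping is then reused unchanged. The extensions to XOS boosts and to \mpph are identical to Subsection~\ref{sec:xos}: replace $\eta_{iT}$ by $\eta_{iT}^{\ell_i^*}$ on the edges of the graph so that every cut still lower-bounds the corresponding \freeset revenue computed with the true boosts $\etaset{\F}$.
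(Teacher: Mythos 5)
Your proof is correct and follows the same strategy as the paper's: a correlated random free set under which every hyperedge $(T,i)$ is cut with probability at least $\tfrac{1}{4d}$, combined with the same directed-cut lower bound on the \freeset revenue. The only difference is the randomization itself --- the paper independently selects each hyperedge with probability $\tfrac{1}{2d}$ and forces its tail into $\F$ (so an item is priced unless some selected hyperedge contains it in its tail), whereas you independently mark items and price a marked item unless some hyperedge containing it has a marked sink --- but both exploit the out-degree bound in the same way (a union/independence bound over the at most $d$ hyperedges containing a given item in their tails) and give the identical constant.
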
 

\begin{proof}
When the hypergraph has maximum-out-degree $d$, that is, $d$ is the largest number of edges directed out of any item, a slightly different random construction of the free set gives a $4d$-approximation to $\srev(\hat{F})$.  For each hyperedge $(T,i)$, with probability $\frac{1}{2d}$, we place \emph{all} items $j \in T$ into the free set.  We run this process for every hyperedge (in some arbitrary order).  If, after this process, an item $j$ is not assigned to the free set, then item $j$ is priced (placed into $\bar\F$). % if no hyperedge $(T, i)$ such that $j \in T$ is forced to be free.  
For any item $i$, the item is priced when none of the (at most $d$) edges that are directed from a set which contains $i$ are placed into the free set, which occurs with probability at least $(1 - \frac{1}{2d})^{d} \geq \frac{1}{2}$.  The probability of $i$ being free is of course at least $\frac{1}{2d}$.  

Then any edge $(T,i)$ crosses the cut from $\{s\} + \F$ to $\bar \F$ with probability at least $\frac{1}{2d} \cdot \frac{1}{2} = \frac{1}{4d}$.  Then by the same analysis as in the proof of Lemma~\ref{lem:4ksrev}, the expected weight of the cut from $\{s\} + \F$ to $\bar \F$ is at least $\frac{1}{4d} \sum _i \etaset{[m]} \cdot R_i = \frac{1}{4d} \cdot \srev(\hat{F})$, which is again a lower bound on the expected revenue of Mechanism \freeset with partition $(\F, \bar\F)$.
\end{proof}

Together, this gives %\kiranote{just PPH?}
\begin{align*}
\opt\hyphen\pph \leq \opt(\hat{F}) &\leq 2 \, \srev(\hat{F}) + 4\, \brev(\hat{F}) & \text{Lemma~\ref{lem:additiveboostmpph} and Theorem \ref{thm:improved additive}} \\
&\leq 8\min\{k,d\}\, \freeset\hyphen\pph + 4\, \brev\hyphen\pph.  & \text{Lemmas~\ref{lem:4ksrev} and \ref{lem:4dsrev}}
\end{align*}
%Similarly, one can replace $k$ by $d$ in the above.
\begin{theorem} 
	\label{thm:hypergraph}The better of selling the grand bundle and Mechanism \freeset for \pph valuations, with directed-positive-rank $k$ and maximum-out-degree $d$, is an $(8\min\{d,k\} + 4)$-approximation to the optimal revenue. 
	\end{theorem}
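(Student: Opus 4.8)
The statement follows by assembling pieces already in place; the plan is to chain four facts, passing from the \pph instance under $F$ to the fully-boosted additive surrogate $\hat F$ and back.

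\textbf{Step 1: move to the additive surrogate and bound it by simple mechanisms.} I would first invoke Lemma~\ref{lem:additiveboostmpph} to get $\opt\hyphen\pph(F)\le\opt\hyphen\add(\hat F)$. This is the one genuinely delicate ingredient, and it is where the Lagrangian-duality ``dual-covering'' argument (the \pph analog of the proof of Lemma~\ref{lem:additiveboost}) does the work: one cannot argue revenue monotonicity directly, but the optimal dual of the \pph instance is covered by a feasible dual of $\hat F$ after scaling each type coordinate by $\etaset{[m]}$ and using $\etaset{S}\le\etaset{[m]}$ from \cref{eq:monotoneboosts}. Then apply Theorem~\ref{thm:improved additive} with $a=1$ to $\hat F$, giving $\opt\hyphen\add(\hat F)\le 2\,\srev(\hat F)+4\,\brev(\hat F)$.

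\textbf{Step 2: pull the two terms back to mechanisms on $F$.} For the bundle term, a buyer who purchases everything receives the full boost $\etaset{[m]}$ on each item in both settings, so the grand-bundle value distribution is identical and $\brev\hyphen\add(\hat F)=\brev\hyphen\pph(F)$. For the separate-selling term I would cite Lemmas~\ref{lem:4ksrev} and~\ref{lem:4dsrev}: with the inclusion probability tuned to $k$ in one random-free-set construction and to $d$ in the other, the directed cut from $\{s\}\cup\F$ to $\bar\F$ in the weighted digraph (edge $(T,i)$ of weight $\eta_{iT}R_i$, edge $(s,i)$ of weight $R_i$) has expected weight at least $\tfrac{1}{4\min\{d,k\}}\sum_i\etaset{[m]}R_i=\tfrac{1}{4\min\{d,k\}}\,\srev(\hat F)$; that cut weight lower-bounds the \freeset revenue for the corresponding partition, since its prices inflate $r_i^*$ only by $\etaset{\F}$ and each priced item still sells with probability at least $1-F_i(r_i^*)$. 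Hence $\srev(\hat F)\le 4\min\{d,k\}\,\freeset\hyphen\pph(F)$.

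\textbf{Step 3: combine, and note the \mpph extension.} Putting these inequalities together yields $\opt\hyphen\pph(F)\le 8\min\{d,k\}\,\freeset\hyphen\pph(F)+4\,\brev\hyphen\pph(F)\le(8\min\{d,k\}+4)\max\{\freeset\hyphen\pph(F),\brev\hyphen\pph(F)\}$, which is the claim. The passage from \pph to \mpph needs no new idea: as in Subsection~\ref{sec:xos}, replacing each edge weight by the one from the maximizing hypergraph $\ell_i^*$ keeps every cut a valid lower bound on the matching \freeset revenue, and the grand-bundle revenue is again unchanged. The only real subtlety is that \freeset is priced with the conservative boost $\etaset{\F}$ yet is compared against $\srev(\hat F)$, which carries the \emph{full} boost $\etaset{[m]}$; this gap is absorbed exactly by the $1/(4\min\{d,k\})$ cut fraction, so the genuine difficulty of the theorem lies entirely in the three cited lemmas rather than in this final assembly --- which is also where I would expect the main obstacle to be.
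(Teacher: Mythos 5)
Your proposal is correct and follows essentially the same route as the paper: the identical chain $\opt\hyphen\pph(F)\le\opt\hyphen\add(\hat F)\le 2\,\srev(\hat F)+4\,\brev(\hat F)\le 8\min\{d,k\}\,\freeset\hyphen\pph(F)+4\,\brev\hyphen\pph(F)$ via Lemma~\ref{lem:additiveboostmpph}, Theorem~\ref{thm:improved additive} with $a=1$, and Lemmas~\ref{lem:4ksrev} and~\ref{lem:4dsrev}. Your remarks on the boost gap being absorbed by the $1/(4\min\{d,k\})$ cut fraction and on the \mpph extension also match the paper's discussion.
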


The analysis in Section~\ref{sec:xos} generalizes the guarantees to \mpph (from additive to XOS boosts). 

%Consider a hypergraph with a similar construction as before, where directed edges $(S,i)$ have weight $w_{S,i} = \eta_{iS} R_i$, and we add a source node $s$ with edges $(s,i)$ with weight $w_{s,i} = R_i$ for all of the item nodes.

%We will partition the graph into two sets, $\F$ and $\bar{\F}$.  Fix the source node $s$ to be in $\F$.  Choose every other node to be in $\F$ independently with probability $(1-\frac{1}{2k})$.  Then for any set $S$ with $|S| \leq k$, all nodes of $S$ appear in $\F$ with probability $(1 - \frac{1}{2k})^{|S|} \geq (1 - \frac{1/2}{k})^{k} \geq 1/2$.  If $X_{S,i}$ is the random variable that is $w_{S,i}$ when $S \subseteq \F$ and $i \not \in \F$---that is, the edge $(S,i)$ crosses the directed cut---and 0 otherwise, then the expected weight of the cut is 
%$$\E[\sum _{S \in {[m] \choose \leq k}} \sum _{i \not \in S} X_{S,i}] = \sum _{S \in {[m] \choose \leq k}} \sum _{i \not \in S} \E[X_{S,i}] \geq \sum _{S \in {[m] \choose \leq k}} \sum _{i \not \in S} \frac{1}{4k} \eta_{S,i} R_i = \frac{1}{4k} \sum_i \eta_{i [m]} R_i.$$

%\nikhilnote{What about MPH?}

\subsection{Lower Bound of $O\left(\frac{1}{k} \mathrm{OPT}(\widehat{F})\right)$ } \label{subsec:lb}

In our analysis, we make two relaxations. First, we relax our benchmark from $\opt\hyphen\mpph$ to the upper bound of $\opt(\hat{F})$. Second, we lower bound the revenue of our \freeset mechanism by undercounting the probabilities of sale.

It is extremely difficult to reason about the probability that a buyer will be interested in buying an item (or a set of items): her value may only be high enough if she buys multiple bundles simultaneously, or she may purchase a bundle even though her value for it is low because it improves her value for other bundles.  Instead, we undercount this probability in the following manner:  when the buyer is deciding whether to take a priced bundle of items $B$, we suppose that she only counts the boosts between items within that bundle $B$ and the boost from the free items in $\F$. We refer to this lower bound on revenue as the \emph{proxy revenue}.

In this section, we show that with respect to these two relaxations, for a reasonable class of simple mechanisms which includes ours, there exists an instance such that the proxy revenue of every mechanism from the class is a factor of $k$ off from $\opt(\hat{F})$. 
%$= \sum_i \etaset {[m]} R_i$.  
%Opt could be brev too
Note that this does not imply that the proxy revenue of these mechanisms is far from $\opt\hyphen\mpph$, as we do not know how far $\opt\hyphen\mpph$ is from the benchmark of $\opt(\hat{F})$; we also do not know how far the proxy revenue may be from the actual revenue.

\begin{definition} A mechanism is from the class of \emph{Bundle Pricing Mechanisms} $\mathcal{B}$ if it computes prices as follows.  For any choices of $y$ and $n_1, \ldots, n_y$, items are partitioned into $y$ ``priced'' bundles of sizes $n_1, n_2, \ldots, n_y$ and one free set $\F$.  The $j^{th}$ bundle $B_j$ of size $n_j$ is priced by first considering each item in the bundle's type inflated by the boosts to the item from the free set $\F$, and then pricing the bundle at its monopoly reserve when considering only the items within $B_j$ (and their complementarities).
 %and within any priced bundle $B$, the prices may factor in the boost from within the bundle (e.g. inflating $p_i$ by a factor of $\eta_{iB}$).  However, the boosts on one priced bundle from buying a separate priced bundle may not be used to calculate prices: it is assumed the buyer will treat the priced bundles as additive.}
\end{definition}

\begin{theorem} \label{thm:hypergraphLB} Among Bundle Pricing Mechanisms $\mathcal{B}$, no mechanism has proxy revenue better than $O\left(\tfrac{1}{k} \, \opt(\hat{F}) \right)$, and \freeset with a random free set $\F$ achieves this. \end{theorem}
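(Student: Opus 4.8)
The plan is to prove the two halves of the statement separately: first, a matching upper bound showing \freeset with a random free set achieves proxy revenue $\Omega(\tfrac1k\,\opt(\hat F))$ on every instance (this is essentially a repackaging of Lemmas~\ref{lem:additiveboostmpph}, \ref{lem:4ksrev}, and Theorem~\ref{thm:improved additive}), and second, a lower-bound construction exhibiting an \mpph instance on which every mechanism in $\mathcal B$ has proxy revenue $O(\tfrac1k\,\opt(\hat F))$. The first half is the easy direction: by Lemma~\ref{lem:4ksrev}, the random free set gives proxy revenue at least $\tfrac{1}{4k}\srev(\hat F)$, and since $\srev(\hat F)\ge \tfrac14\opt(\hat F)$ by the additive decomposition (Theorem~\ref{thm:improved additive} with, say, $a=1$, plus $\brev(\hat F)\le 2\max\{\srev,\brev\}$), we get proxy revenue $\Omega(\tfrac1k\opt(\hat F))$. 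So the substance is the lower bound.

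For the lower bound I would build on the structure of Theorem~\ref{thm:lbstandard}, pushed to the hypergraphic regime. The idea is to have a ``hub'' set of $k$ items whose simultaneous ownership boosts each of many ``spoke'' items by a large factor (roughly $m/k$ or so), while no proper subset of the hub gives any boost at all. Concretely: take $k$ hub items and $m-k$ spoke items; for each spoke $i$ put a single hyperedge $(T,i)$ where $T$ is exactly the hub set, with boost $\eta_{iT}$ large; all other $\eta$'s are zero. Choose the spoke value distributions (e.g.\ equal-revenue-like, or the $2^i$-with-probability-$2^{-i}$ construction as in Theorem~\ref{thm:lbstandard}) so that $\opt(\hat F)$, which sees every spoke fully boosted, is large—on the order of $(m-k)\cdot\eta$ times the per-item monopoly revenue. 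Then I would argue that any $\mathcal B$ mechanism is forced to ``waste'' the boost: to realize the boost on the spokes via the proxy, the buyer's proxy value for a priced bundle only counts complementarities \emph{within that bundle and from the free set $\F$}. So unless all $k$ hub items are in $\F$, a priced bundle containing some spokes does not get the hub boost in the proxy accounting; and if all $k$ hub items are in $\F$, then the hub items themselves generate no priced-bundle revenue (they're free), and—crucially—the proxy only counts the hub boost, not whatever the spokes might do for each other, so the per-bundle price is still only the monopoly reserve of the (boosted) items within that bundle, which loses a factor related to how the spokes are partitioned. I'd calibrate the spoke distribution so that splitting spokes into bundles can recover only an $O(1/k)$ fraction: e.g.\ make the spokes ``substitute-like'' in revenue so that bundling them helps by at most a constant, but the benchmark $\opt(\hat F)$ gets the full sum because it's additive on the fully-boosted values.

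The key computation is: (i) $\opt(\hat F) = \Theta(M)$ where $M$ is the total fully-boosted monopoly revenue (follows from Theorem~\ref{thm:improved additive} and the fact that for the chosen product distribution $\srev(\hat F) = \Theta(\brev(\hat F)) = \Theta(M)$); (ii) for \emph{any} partition into a free set $\F$ and priced bundles $B_1,\dots,B_y$, the proxy revenue is $O(M/k)$. For (ii) I'd split into cases on whether $\F \supseteq$ hub. If not, some hyperedge $(T,i)$ has $T \not\subseteq \F$, so the spoke $i$'s boost is not counted in proxy pricing of its bundle, and the total unboosted spoke revenue is only an $O(1/\eta) = O(k/m)$ fraction of $M$—need the construction so that \emph{no} spoke can get its boost, which requires $T$ = entire hub for every hyperedge, so $\F\not\supseteq$ hub kills all boosts at once. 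If $\F\supseteq$ hub, then the $k$ hub items contribute zero priced revenue, and the priced bundles consist only of (boosted) spokes; here I use that the per-bundle monopoly price, summed over any partition of the spokes, is at most a constant times $\max$ over individual boosted spokes which (by the equal-revenue-type choice) is $O(M/(m-k))\cdot(\text{number of bundles})$, and I cap the number of priced bundles—actually the cleanest is to choose the spoke distribution so that $\brev$ over any sub-bundle of $j$ spokes is $\Theta(\log j)$ or $\Theta(1)$ times the single-spoke revenue, making the total over all bundles $O(M)$ only when there's one bundle, and forcing a genuine $1/k$-type loss requires instead hubs that are \emph{individually} valuable. The honest main obstacle is precisely this: getting the spoke-partitioning loss to be $\Theta(k)$ rather than $\Theta(\log m)$ or $\Theta(1)$. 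I expect the right construction makes the \emph{hub} items the valuable ones and the boost structure such that the free set must contain all of a size-$k$ hub to activate anything, so that making the hub free forfeits a $\tfrac{k}{m}$-fraction directly while keeping the benchmark at $\Theta(m)$—tuning these two quantities against each other to land exactly at ratio $k$ is the delicate part, and I would verify it by an explicit small example and then scale.
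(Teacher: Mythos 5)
Your lower-bound construction has a genuine gap, and it is exactly at the point you flag as the ``delicate part.'' In the hub-and-spoke instance (one size-$k$ hub $T$, and for each spoke $i$ a single hyperedge $(T,i)$), a Bundle Pricing Mechanism can simply set $\F = T$ and sell every spoke separately at its fully boosted monopoly price. The proxy accounting \emph{does} credit each spoke with its boost, since the entire source $T$ lies in $\F$; the only revenue forfeited is that of the $k$ hub items themselves, a $k/m$ fraction. So this mechanism's proxy revenue is $(1-k/m)\cdot\Theta(\srev(\hat F)) = \Theta(\opt(\hat F))$, not $O(\tfrac1k\opt(\hat F))$, and the construction fails outright. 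The paper avoids this by making \emph{every} size-$k$ set a hyperedge source: it takes the complete directed $k$-uniform hypergraph with uniform tiny weights $c=\frac{m}{2\binom{m-1}{k}}$ and the $2^i$-w.p.-$2^{-i}$ distribution on all items. Then a free set of size $\C$ activates only $\binom{\C}{k}$ of the $\binom{m-1}{k}$ hyperedges into each priced item, and a priced bundle of size $n_j$ self-activates only $\binom{n_j-1}{k}$ of them (worth $O(m)$ in total over any partition). The equal-revenue-type distribution caps each bundle's proxy revenue at $O(1)$ times its activated boost (the paper's Lemma on single-bundle proxy revenue), and an AM-GM bound on $\binom{\C}{k}\cdot y$ shows the best trade-off between free-set size and number of priced items is $\C\approx \tfrac{k}{k+1}m$, yielding $O(m^2/k)$ against a benchmark $\opt(\hat F)\ge\srev(\hat F)=\Theta(m^2)$. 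That ``coverage'' tension --- no single set can activate a constant fraction of the boosts without being almost all of $[m]$ --- is the missing idea.

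Separately, your ``easy direction'' is also not right as stated: $\srev(\hat F)\ge\tfrac14\opt(\hat F)$ does not follow from Theorem~\ref{thm:improved additive} (which gives $\opt\le O(\srev)+O(\brev)$, and $\brev$ can dominate $\srev$ by an unbounded factor, e.g.\ for many i.i.d.\ items). The theorem's claim that \freeset{} ``achieves this'' is a statement about the constructed instance, where the random free set's proxy revenue is computed directly to be $\Theta(m^2/k)=\Theta(\tfrac1k\opt(\hat F))$; it is not a claimed instance-by-instance guarantee of $\Omega(\tfrac1k\opt(\hat F))$.
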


\begin{proof}
Consider the following instance.  There are $m$ items, and the buyer's type for item $i \in [m]$ is
$$t_i = \begin{cases} 2^i & \text{w.p. } 2^{-i} \\ 0 & \text{otherwise.}\end{cases}$$  %Then for all items $i$, $\E[v_i] = 1 = R_i$ at the monopoly price of $r_i = 2^i$.  
For every size-$k$ set $T \in [{m \choose k}]$, for all items $i \not \in T$, we have that $\eta_{iT} = c := \frac{m}{2{m-1 \choose k}}$.  That is, the market structure is the directed complete graph of hyperedges of size exactly $k$.  Any other hyperedge $(T,i)$ where $|T| \neq k$ has weight $\eta_{iT} = 0$.  In total, there are ${m-1 \choose k}$ edges of weight $\frac{m}{2{m-1 \choose k}}$ into each item $i$, thus $\etaset {[m]} =  1+ \frac{m}{2}$.  
%Further, for any set $S$ of $n$ items, the buyer's boost for item $i$ is $\etaset{S} = 1+ {n - 1 \choose k} c = 1+ m {n - 1 \choose k} \big/ 2{m-1 \choose k} < \etaset {[m]}$.  

%\fix{Extend to two-part tariffs of this form in general?}

Under these valuations and market parameters, for the random free set construction  described in the previous section (pricing any item with probability $\frac{1}{2k}$), we get proxy revenue at least $(1 + \frac{m}{2}) m \frac{1}{4k} = \frac{m + m^2/2}{4k}$.  

We now show that the proxy revenue of every mechanism from $\mathcal{B}$ is $O(\frac{m^2}{k})$, and is thus no better than a constant factor times the proxy revenue of our mechanism.

\begin{lemma}\label{lem:singlebundlerev}
In the above construction, for every bundle of $n$ items and a free set $\F$ of size $|\F| = \C$ items, the proxy revenue of the bundle is
$O\left(\left({n \choose k} + {\C \choose k}\right)\cdot c \right)$, where $c=\frac{m}{2{m-1 \choose k}}$.
\end{lemma}

\begin{proof}
We count the boosts that are incorporated into the proxy revenue for any item: from within its bundle, and from the free set.  
First, for any item $i$ within some bundle $B$ of size $n$, the boosts from within the bundle are exactly $\sum _{T \subseteq B \smallsetminus\{i\}: |T| = k} \eta_{iT} = {n-1 \choose k} \cdot c$.  Then, the boosts that $i$ gets from the free set are $\sum _{T \subseteq \F: |T| = k} \eta_{iT} = {\C \choose k} \cdot c$.  Together, $i$'s boosts accounted for in the proxy revenue are 
$$\eta(B+\F) := \quad \eta_i(B+\F)  =  \left( {n-1 \choose k} + {\C \choose k} \right) \cdot c.$$ %\frac{m}{2 {m-1 \choose k}}.$$

%For a bundle $B$ with $|B| = n$, for every item $i \in B$, item $i$ gets a boost of ${n-1 \choose k}$ from the hyperedges within the bundle.  In addition, item $i$ gets a boost of ${\C \choose k}$ from the free set.  Then for every $i$, the total boost is $\eta(B+\F) = \left( {n-1 \choose k} + {\C \choose k} \right) \frac{m}{2 {m-1 \choose k}}$.

We now show that for any bundle $B$ of $n$ items, the proxy revenue is at most $4 \cdot \eta(B+\F)$.  According to the way we undercount probability for the proxy revenue, for any $\ell$, the (undercounted) probability that the buyer's value for bundle $B$ is greater than $\eta(B+\F) 2^\ell$ is at most the probability that he has value at least $2^{\ell}$ in base valuations, which is $\sum _{i=\ell} ^m 2^{-i} \leq 1/2^{\ell-1}$.

Therefore, for any price for this bundle $p_B \in[\eta(B+\F) 2^\ell,\eta(B+\F) 2^{(\ell+1)}]$, the expected proxy revenue for this bundle is no more than $\eta(B+\F) 2^{(\ell+1)}\cdot 2^{-(\ell-1)}=4 \cdot \eta(B+\F)$.
\end{proof}

\begin{enumerate}
\item The proxy revenue for selling separately is  $m$.  This is the proxy revenue earned from optimally selling the $m$ items separately, without giving any item out for free.  Posting a price of $2^i$ for each item $i$ earns expected proxy revenue $1$ for each of the $m$ items.

\item $\brev \leq 2m$, by the proof of Lemma~\ref{lem:singlebundlerev} when $n = m$ and $\C = 0$.

%For any bundle $S$ of size $n$, any bundle price gets revenue $2 \eta_{i[n]}$. First, the probability that the buyer's value is greater than $\eta_{i[n]} 2^\ell$ is at most $\sum _{i=\ell} ^m 2^{-i} \leq 1/2^{\ell-1}$ for all $\ell$. Therefore, for any price $p\in[\eta_{i[n]} 2^i,\eta_{i[n]} 2^{(i+1)}]$, the expected revenue is no more than $\eta_{i[n]} 2^{(i+1)}\cdot 2^{-(i-1)}=4\eta_{i[n]}$. 
%On the other hand, it is not hard to verify that price $\eta_{i[n]} 2^{\ell}$ To see this, observe that if the bundle is priced at $\eta_{i[n]} 2^\ell$, then the buyer will purchase the bundle if and only if she has non-zero value for some item $i \geq \ell$.  Even if the buyer has non-zero values for all items $i < \ell$, if she does not have non-zero value for any item $i \geq \ell$, then her value for the bundle is at most $\eta_{i[n]} (2^\ell - 1)$.  Hence, the probability of the buyer buying at a price of $\eta_{i[n]} 2^\ell$ is at most $\sum _{i=\ell} ^m 2^{-i} \leq 2^{\ell-1}$, earning revenue at most $2 \eta_{i[n]}$.  
%Any price in between some choice of $\ell$ or $\ell+1$, the probably of sale will decrease and the price increases by less than a factor of 2, so this choice is no better than twice the revenue for the price at $\ell$.  The grand bundle is of size $n=m$, earning revenue $\brev = m$. In the case of grand bundling, the proxy revenue is the same as revenue.

\item For any mechanism $\mathcal{M} \in \mathcal{B}$, the proxy revenue of $\mathcal{M}\hyphen\pph$ is no more than $O(m^2/k)$.  Consider the mechanism that offers a free set of size $\C$ to the buyer, and then splits the remaining $m-\C$ items into $y$ bundles where the $j^{th}$ bundle is of size $n_j$.  According to Lemma~\ref{lem:singlebundlerev}, the mechanism's proxy revenue is $$O\left( \left( {\C \choose k }\cdot y +\sum_{j=1}^{y} {n_j \choose k}\right) \cdot c \right).$$ Clearly, $\sum_{j=1}^{y} {n_j \choose k}\leq {\sum_{j=1}^{y} n_j \choose k}={m-\C \choose k}$. By definition of $c$, $O\left( {m-\C \choose k} \cdot c\right) = O(m)$. 

Next, we bound ${\C \choose k }\cdot y$ by $\frac{\C^k}{k!}\cdot (m-\C)$. Then, by the AM-GM inequality, 
%$$\C^k\cdot (m-\C)\leq m^{k+1}\cdot \frac{k^k}{(k+1)^{(k+1)}}=O\left(\frac{m^{(k+1)}}{k+1}\right).$$ 
$$\C^k\cdot (m-\C) \quad = \quad \left(\frac{\C}{k}\right)^k\cdot (m-\C) \cdot k^k \quad \leq \quad \left(\frac{m}{k+1} \right)^{k+1} \cdot k^k \quad = \quad O\left(\frac{m^{(k+1)}}{k+1}\right).$$ 
Combining everything, we have that $$c\cdot {\C \choose k }\cdot y\leq O\left(\frac{c}{k!}\cdot \frac{m^{(k+1)}}{k+1}\right)=O\left(\frac{m^2}{k}\right),$$
where again the definition of $c$ kills the factor of $\frac{m^k}{k!}$.

\notshow{$x = (m-\C)/\ell$ bundles of size $\ell$.  (By the symmetry of this example and the lack of boosts between bundles, it cannot be optimal to sell differently sized bundles.)  The revenue from this mechanism, assuming $\C \geq k$ and $\ell > k$ (as otherwise there'd be no boost from the free set or within bundles) is
\begin{align*}
\rev &= \left[ 1 + \frac{{\ell-1 \choose k} \cdot m/2}{{m-1 \choose k}} + \frac{{\C \choose k} \cdot m/2}{{m-1 \choose k}} \right] \cdot \frac{m-\C}{\ell} \\
& = x + \frac{(\ell-1) \cdots (\ell-k) m x}{(m-1) \cdots (m-k) 2} + \frac{(\C) \cdots (\C-k) m x}{(m-1) \cdots (m-k) 2} \\
& \leq x + \frac{m}{2x^{k-1}} + \frac{(\C) \cdots (\C-k) m x}{(m-1) \cdots (m-k) 2} \\
\text{and } \quad \frac{d^2}{dx^2} &= \frac{m(k-1)k}{2x^{k+1}} > 0.
\end{align*}
If $\ell < k$, the second term of the first line is 0, and thus the second derivative is simply 0.  This implies that this revenue upper bound is convex, and is maximized at the smallest or largest $x$, thus at $x=1$ with $\ell = m$, or at $x = (m-\C)/(k+1)$ with $\ell = k+1$.  In the event that $\ell \leq k$, clearly $\ell = 1$ (and $x = m-\C$) is optimal.

When $x=1$ (and $\ell = m-\C$), the bound is 
$$1 + \frac{m}{2} + \frac{{\C \choose k} \cdot m/2}{{m-1 \choose k}} = 1 + \frac{m}{2} \left[1 + \frac{{\C \choose k}}{{m-1 \choose k}} \right] \leq 1 + m$$
since $\C \leq m$.
When $x = (m-\C)/(k+1)$, bound is at most $\frac{m}{2} + \frac{1+m^2/2}{k+1}$.  Then for $\ell > k$, revenue is at most $O(m^2/k)$.
%$$x + \frac{m}{2x^{k-1}} + \frac{(F) \cdots (F-k) m x}{(m-1) \cdots (m-k) 2}$$
%$$\frac{m-F}{k+1} + \frac{m}{2} + \frac{(F) \cdots (F-k) m x}{(m-1) \cdots (m-k) 2} \leq \frac{m-F}{k+1} + \min\{F,m-F\} \cdot \frac{m}{2(k+1)}$$

However, when $x = m-\C s$ and $\ell = 1$, we get precisely the following revenue precisely from selling the priced items separately at their inflated monopoly prices: 
$$(m-\C s) \cdot \left[1 + \frac{{\C s \choose k}m}{{m-1 \choose k}2} \right] \leq cm + \frac{cm^2}{2} (1-c)^k$$% = (m-F) \cdot \left[1 + \frac{F \cdots (F-k+1) m}{(m-1) \cdots (m-k) 2} \right].$$
for $m-\C s = cm$ and thus $\C s/m = 1-c$.  We focus on bounding the second term in the above quantity; the derivative with respect to $c$ is $$\frac{d}{dc} = \frac{m^2}{2}(1-c)^k - \frac{cm^2k}{2}(1-c)^{k-1} = \frac{m^2}{2}(1 - (k+1)c)(1-c)^{k-1}.$$
This quantity is optimized when $c = 1/(k+1)$.  This term is then bounded by $m^2/k$.  As the first term $(m-\C s) \cdot 1$ is clearly bounded by $m$, then the revenue of selling these items at their inflated monopoly prices is at most $m + m^2/k = O(m^2/k)$.
}

\end{enumerate}

\end{proof}

\section{A common generalization}\label{sec:modelcomparison}
As observed earlier, our model captures  scenarios where the additional value from a combination of items depends on the base values for the items, whereas the common \ph model captures  scenarios where this is independent. 
We now  present a common generalization of these two models. 
Consider the hypergraphic representation of a valuation function, i.e., where the valuation function is represented by 
\[  v(S) =  \sum_{T \subseteq S} v_T  \]
for some values $v_T$; the $T$ for which $v_T > 0 $ are the hyperedges of the underlying hypergraph. 
Our model can be thought of as a special case where $v_T$ is a linear combination of the base values for the items in $T$ : 
\[ v_T = \sum_{i \in T} \etahyperedge{T} v_i . \]
%The possible $v_T$s in our model form a subspace, with additional structure.  
More generally, one could have an arbitrary linear transformation from the type space to the hypergraphic representation: let $t = (t_1, t_2, \ldots, t_d)$  be the type, for some dimension $d$, and 
$$ v_T = \sum_{i\in [d]}  \etahyperedge T t_i .$$
An interpretation of this model is that, for each  $i \in [d]$, $t_i$  represents the buyer's value for some activity, and $\etahyperedge T $ is the additional boost for that activity made possible by the buyer owning the combination of items in $T$. 
Assume that each $t_i$ is independent of the others. 
This generalizes the \ph model with independent $v_T$s: each hyperedge corresponds to a different activity, and is boosted only by itself. 
The model can be further extended to XOS boosts, i.e., a maximum of many linear combinations (as in \mph). 
We now give an example where such a model is useful. 

\paragraph{Example 3:}
Consider a computing device such as a tablet, which has multiple uses, such as browsing the web, and taking notes. 
A buyer's valuation for such a device can be modeled as a linear combination of his value for each of the activities it enables. 
Now consider an accessory such as a stylus. This makes some of the activities faster, such as taking notes. 
The additional value it provides can be modeled as a linear combination of values for the corresponding activities. 
Similarly a note-taking app also makes the note-taking activity more valuable. 
Moreover, it could be that a combination of a stylus and a compatible app has further added boost to the valuation for that activity.  

This perspective is similar in spirit to the `subadditive with independent items' model of {\citet{RW}}. The types $(t_1, t_2, \ldots, t_m)$ are drawn from a product distribution of $m$ spaces,  one for each item; the space corresponding to each item itself can be multi-dimensional. The valuation function for a set $v(S)$ can be an arbitrary function that depends only on  $t_i $ for $i \in S$, subject to subadditivity.

What is the point of a model even more general than \ph when we have seemingly strong lower bounds for \ph? These lower bounds  are  for $\max\{\separate, \bundle\}$, which are (by now) the standard pricing mechanisms for which upper bounds have been shown. While it makes sense to consider the simplest of the pricing schemes when it comes to upper bounds, lower bounds against such utterly simple pricing schemes are much less compelling. 
When it comes to items that are complements, where such pricing schemes may not be the most natural, such lower bounds are more of an indication that we need to study alternate pricing schemes, rather than a sign of hopelessness.   
A take away from our results is that suitably simple pricing schemes \emph{could} give constant factor approximations for reasonably general valuation models with complements. It is too early to discard the hope for such results for \ph and other generalizations.

%\begin{acks}
%acknowledgements 
%\end{acks}

% Bibliography
\bibliographystyle{ACM-Reference-Format}
\bibliography{complements}

\newpage
\appendix
	
\section{Improved Additive Bound} \label{sec:improvedadditive}

We now give a proof %of Theorem~\ref{thm:improved additive} %from Section~\ref{sec:mainppc}, 
which improves the $6$-approximation by \citet{BILW} to $5.382$.
%\begin{numberedtheorem}{\ref{thm:improved additive}}
\begin{theorem}\label{thm:improved additive}
	For any $a>0$, $$\opt\leq \left(2+\frac{2}{a^2}\right)\cdot \brev+ \left(a+1\right)\cdot \srev.$$ 
	In particular, if we choose $a=\sqrt[3]{4}$, then 
	$$\opt\leq \left(3+\frac{3}{2}\sqrt[3]{4}\right)\cdot \max\{\srev,\brev\}\leq 5.382\cdot \max\{\srev,\brev\}.$$
\end{theorem}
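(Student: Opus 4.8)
The plan is to run everything through the Cai--Zhao/\citepalias{CDW} Lagrangian duality of Lemma~\ref{lem:duality}, specialized to a single additive buyer. There $\etaset S\equiv 1$, so for each type the inner $\max_{x\in\P}$ is simply ``take every item whose $\lambda$-virtual value is nonnegative,'' and Lemma~\ref{lem:duality} becomes
\[
\opt \;=\; \min_{\lambda}\ \E_t\!\Big[\sum_i \phi_i(t)^+\Big],
\]
where $\phi_i$ is the $\lambda$-virtual value. Hence, for each fixed $a>0$, it suffices to exhibit one feasible dual $\lambda$ whose objective is at most $(2+2/a^2)\brev+(a+1)\srev$. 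Every step after this lives in the primal/mechanism world, which is what makes the argument modular.

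I would first pick, for each item $i$, a threshold $\beta_i$, chosen so that the exceedance probabilities $q_i:=\Pr[t_i\ge\beta_i]$ are governed by a single ``budget'' $a=\sum_i q_i$, distributed across items (water-filling/quantile style) so as to tie $\beta_i q_i$ to the standalone monopoly revenues $R_i$. I would then take $\lambda$ to be a \citepalias{CDW}-style flow that ``irons the tail first'': at each type it pushes its flow budget along the largest coordinate $i$ with $t_i\ge\beta_i$, while purely-sub-threshold (``core'') types essentially keep their raw values. The induced virtual value then satisfies $\phi_i(t)\le t_i$ always, equals $t_i$ (or less) on the core event $E:=\{t_j<\beta_j\text{ for all }j\}$, and off $E$ its positive parts sum, coordinate by coordinate, to Myerson-type quantities. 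Carrying out the \citepalias{CDW} bookkeeping (and using independence of the coordinates for the ``secondary'' tail terms) yields
\[
\opt \;\le\; \E\big[V^{\mathrm{core}}\big] \;+\; (a+1)\,\srev,
\qquad V^{\mathrm{core}}:=\sum_i\min(t_i,\beta_i),
\]
where the ``$1\cdot\srev$'' is the monopoly revenue extracted from the single largest tail coordinate ($\sum_i R_i=\srev$) and the ``$a\cdot\srev$'' collects the secondary tail terms, whose total weight is controlled by the threshold budget $\sum_i q_i=a$.

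It remains to bound $\E[V^{\mathrm{core}}]$, which is the core-tail half of the \citet{BILW} argument, done in the primal. Since the true grand-bundle value stochastically dominates $V^{\mathrm{core}}$, posting the optimal grand-bundle price and applying Chebyshev gives $\E[V^{\mathrm{core}}]\le 2\,\brev+2\sqrt{\operatorname{Var}(V^{\mathrm{core}})}$ (optimizing the Chebyshev cutoff tightens the constants). By independence,
\[
\operatorname{Var}\big(V^{\mathrm{core}}\big)=\sum_i\operatorname{Var}\!\big(\min(t_i,\beta_i)\big)\le\sum_i\E\!\big[\min(t_i,\beta_i)^2\big]\le 2\sum_i\beta_i R_i,
\]
using $\Pr[t_i>s]\le R_i/s$ (and likewise $\le 2\sum_i\beta_i\,\brev$ via $\Pr[t_i>s]\le\brev/s$); the chosen thresholds then make this small enough that the Chebyshev slack on the core contributes the claimed $\tfrac{2}{a^2}\brev$ (possibly after absorbing a lower-order $\srev$ correction into the $(a+1)\srev$ term already present). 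Combining the two displays gives $\opt\le(2+2/a^2)\brev+(a+1)\srev$. The ``in particular'' statement is then one line: bound $(2+2/a^2)\brev+(a+1)\srev\le\big(3+a+\tfrac{2}{a^2}\big)\max\{\srev,\brev\}$, and minimize $g(a)=3+a+2/a^2$; since $g'(a)=1-4/a^3$, the minimum is at $a=\sqrt[3]{4}$, where $g=3+\tfrac32\sqrt[3]{4}\le5.382$.

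The main obstacle is the core estimate together with the threshold choice: the single scalar $a=\sum_i\Pr[t_i\ge\beta_i]$ must do double duty --- keeping the secondary tail terms down to $a\,\srev$ while simultaneously making $\operatorname{Var}(V^{\mathrm{core}})$ (through $\sum_i\beta_i R_i$) small enough that the Chebyshev slack is only $\tfrac{2}{a^2}\brev$ --- and one has to push the Chebyshev step to the exact constants rather than something lossier. By comparison, the duality reduction is immediate from Lemma~\ref{lem:duality}, and the tail flow, while it requires the standard \citepalias{CDW} feasibility check, is only a routine modification of the canonical construction.
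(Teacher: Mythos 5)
There is a genuine gap, and it sits exactly where you flagged the ``main obstacle'': the core estimate. The paper does not re-derive the flow from scratch; it takes the \citepalias{CDW} decomposition $\opt \leq \textsc{single}+\textsc{tail}+\textsc{core}$ as given and makes two sharp quantitative improvements that your sketch does not reproduce. First, the paper bounds $\textsc{tail}\leq \brev$ (not by a budgeted $a\cdot\srev$ as in your accounting): for $t_j>R$ the grand bundle posted at price $t_j$ sells whenever some other item exceeds $t_j$, and the total mass $\sum_j \Pr[t_j>R]\leq \sum_j R_j/R=1$. Second, and decisively, the core is bounded with \emph{Cantelli's one-sided inequality}, not Chebyshev: with $V=\sum_j t_j\ind[t_j\leq R]$, $\mathrm{Var}[V]\leq 2R^2$, and deviation $\tau=aR$, Cantelli gives $\Pr[V\geq \textsc{core}-aR]\geq a^2/(2+a^2)$, hence $\textsc{core}\leq (1+2/a^2)\brev+a\,\srev$. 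That is the entire source of the $(2+2/a^2,\,a+1)$ tradeoff.

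Your Chebyshev-based core bound cannot deliver these constants. The form $\E[V]\leq 2\brev+2\sqrt{\mathrm{Var}(V)}$ contributes an additive $2\sqrt{2}\,\srev$ that does not depend on $a$, so the $\srev$ coefficient would be $1+a+2\sqrt{2}$ rather than $a+1$. If instead you run two-sided Chebyshev at deviation $aR$ to mimic Cantelli, you need $a>\sqrt{2}$ and obtain $\textsc{core}\leq \frac{a^2}{a^2-2}\brev+a\,\srev$, which at $a=\sqrt[3]{4}$ gives roughly $4.85\,\brev$ from the core alone --- nowhere near the claimed $\left(1+2/a^2\right)\brev\approx 1.79\,\brev$, so the final $5.382$ does not follow. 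Your closing optimization of $g(a)=3+a+2/a^2$ is correct, but the inequality being optimized has not been established. To repair the proof you need (i) the $\textsc{tail}\leq\brev$ argument via grand bundling, and (ii) the one-sided Cantelli step in place of Chebyshev.
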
	
%\end{numberedtheorem}

\begin{proof}[Proof of Theorem~\ref{thm:improved additive}]
	We improve the analysis used in~\citet{CDW}, where they obtain an upper bound on $\opt$ using duality. 
	They  further partition the upper bound into three parts: $$\opt \leq \textsc{single} + \textsc{tail} + \textsc{core}.$$ The first term  \textsc{Single} is upper bounded by \srev. The second term \textsc{tail} is also upper bounded by \srev, but the first thing we show is that it can also be upper bounded by \brev.
	
	Let item $j$'s value $t_j$ be drawn from $F_j$ independently, and $f_j(v_j)$ be the probability that $t_j=v_j$. Following the notation of~\citet{CDW}, we use $R$ to denote \srev, and \textsc{tail} is defined as follows.
	
	$$\textsc{tail}=\sum_{j\in[m]} \sum_{t_{j}> R} f_{j}(t_{j})\cdot t_{j}\cdot \Pr_{t_{-j}\sim F_{-j}}[\exists \ell \neq j,\ t_{\ell}\geq t_{j}].$$
	
	This quantity is the expected value above $r$ from all but the highest item.  Note that for any $j$ and any $t_j$, selling the grand bundle at a price of $t_j$ earns revenue at least $t_{j}\cdot \Pr_{t_{-j}\sim F_{-j}}[\exists \ell \neq j,\ t_{\ell}\geq t_{j}]$.  Hence, $$\textsc{tail}\leq \brev\cdot \left(\sum_{j\in[m]} \sum_{t_{j}> R} f_{j}(t_j)\right)\leq \brev.$$ The second inequality is because $R_j$ is the optimal revenue for selling only item $j$, and $R_j \geq R \cdot \Pr[t_j \geq R]$, thus $\sum_{t_{j}> R} f_{j}(t_j)\leq R_j/R$; also, $R = \sum_j R_j$.

Next, we improve the analysis of the term \textsc{core}. In~\citet{CDW}, \textsc{core} is upper bounded by $2$\brev+$2$\srev. They make use of Chebyshev's inequality to obtain this bound. We improve their analysis using a tighter inequality due to Cantelli.

The \textsc{core} is defined as follows.   $$\textsc{core}=\sum_{j \in [m]} \sum_{t_{j}\leq R} f_{j}(t_{j})\cdot t_{j}=\E_{t\sim F}\left[\sum_{j\in[m]} t_j\cdot \ind[t_j\leq R]\right]$$
It is shown in \citet{CDW} that $\text{Var}_{t\sim F}\left[\sum_{j\in[m]} t_j\cdot \ind[t_j\leq R]\right]\leq 2R^2$. Now we state Cantelli's inequality: 
\begin{theorem}[Cantelli's Inequality]
	For any real valued random variable $X$ and any positive number $\tau$, $$\Pr\left[ X\geq \E[X]- \tau \right]\geq \frac{\tau^2}{\tau^2+\mathrm{Var}[X]}.$$
\end{theorem}

We define the random variable $V=\sum_{j\in[m]} t_j\cdot \ind[t_j\leq R]$ and apply Cantelli's inequality to $V$ with $\tau=aR$.

$$\Pr[V\geq \textsc{core}-aR]\geq \frac{a^2R^2}{\text{Var}[V]+a^2R^2}\geq \frac{a^2}{2+a^2}.$$
The last inequality is because $\text{Var}[V]\leq 2R^2$. Therefore, $\brev\geq \left(\textsc{core}-aR \right)\cdot \frac{a^2}{2+a^2}$, which implies $\textsc{core}\leq (1+\frac{2}{a^2})\cdot \brev+a\cdot\srev$. Combining our new analysis for the \textsc{tail} and the \textsc{core}, we obtain the new bound.
\end{proof}

\end{document}